\DeclareMathAlphabet{\mathcal}{OMS}{cmsy}{m}{n}
\newcolumntype{Y}{>{\centering\arraybackslash}X} 
\newcommand\sysname{$\mathsf{chainBoost}$\xspace}
\newcommand\opBoost{$\mathsf{OPBoost}$\xspace}
\newcommand{\no}{\textcolor{Red}{\ding{54}}}
\newcommand{\yes}{\textcolor{OliveGreen}{\ding{52}}}
\newcommand{\ifcond}[1]{\textbf{if} {#1} \textbf{then}}
\newcommand{\forloop}[1]{\textbf{for} {#1} \textbf{do}}
\newcommand{\elsecond}{\textbf{else}}
\newcommand\sk{\mathsf{sk}\xspace}
\newcommand\pk{\mathsf{pk}\xspace}
\newcommand\GG{\mathbb{G}\xspace}
\newcommand\ZZ{\mathbb{Z}\xspace}
\newcommand\NN{\mathbb{N}\xspace}
\newcommand\cs{S_c\xspace}
\newcommand\sthresh{\theta_s\xspace}
\newcommand\lthresh{\theta_l\xspace}
\newcommand\com{C\xspace}
\newcommand{\all}{N}
\newcommand{\mal}{\mathscr{M}}
\newcommand\ppt{\mathsf{PPT}\xspace}
\newcommand\led{\mathcal{L}\xspace}
\newcommand\summ{\mathsf{sum}\xspace}
\newcommand\cid{\mathsf{cid}\xspace}
\newcommand\amount{\mathsf{amount}\xspace}
\newcommand\clients{\mathcal{C}\xspace}
\newcommand\servers{\mathcal{S}\xspace}
\newcommand\miners{\mathcal{M}\xspace}
\newcommand{\param}{\ensuremath{\mathtt{pp}}\xspace}
\newcommand\syssetup{\mathsf{SystemSetup}\xspace}
\newcommand\partysetup{\mathsf{PartySetup}\xspace}
\newcommand\stt{\mathsf{state}\xspace}
\newcommand\createTransaction{\mathsf{CreateTx}\xspace}
\newcommand\tx{\mathsf{tx}\xspace}
\newcommand\ask{\mathsf{ask}\xspace}
\newcommand\offer{\mathsf{offer}\xspace}
\newcommand\agreement{\mathsf{agreement}\xspace}
\newcommand\transfer{\mathsf{transfer}\xspace}
\newcommand\dispute{\mathsf{dispute}\xspace}
\newcommand\sync{\mathsf{sync}\xspace}
\newcommand\serviceProof{\mathsf{serviceProof}\xspace}
\newcommand\servicePayment{\mathsf{servicePayment}\xspace}
\newcommand\aux{\mathsf{aux}\xspace}
\newcommand\verifyTransaction{\mathsf{VerifyTx}\xspace}
\newcommand\updateState{\mathsf{UpdateState}\xspace}
\newcommand\mainc{\mathsf{mc}\xspace}
\newcommand\sidec{\mathsf{sc}\xspace}
\newcommand\leader{\mathsf{leader}\xspace}
\newcommand\elect{\mathsf{Elect}\xspace}
\newcommand\createSyncTransaction{\mathsf{CreateSyncTx}\xspace}
\newcommand\verifySyncTransaction{\mathsf{VerifySyncTx}\xspace}
\newcommand\verifyBlock{\mathsf{VerifyBlock}\xspace}
\newcommand\btype{\mathsf{btype}\xspace}
\newcommand\setup{\mathsf{Setup}\xspace}
\newcommand\meta{\mathsf{meta}\xspace}
\newcommand\summary{\mathsf{summary}\xspace}
\newcommand\txtype{\mathsf{txtype}\xspace}
\newcommand\head{\mathsf{head}\xspace}
\newcommand\block{\mathsf{B}\xspace}
\newcommand\prune{\mathsf{Prune}\xspace}
\newtheorem{theorem}{Theorem}
\newtheorem{lemma}{Lemma}
\newtheorem{definition}{Definition}
\begin{document}

\title{chainBoost: A Secure Performance Booster for Blockchain-based Resource Markets}

\author{\IEEEauthorblockN{Zahra Motaqy}
\IEEEauthorblockA{\textit{University of Connecticut} \\
raha@uconn.edu}
\and
\IEEEauthorblockN{Mohamed E. Najd}
\IEEEauthorblockA{\textit{University of Connecticut} \\
menajd@uconn.edu}
\and
\IEEEauthorblockN{Ghada Almashaqbeh}
\IEEEauthorblockA{\textit{University of Connecticut} \\
ghada@uconn.edu}
}

\maketitle

\begin{abstract}
Cryptocurrencies and blockchain technology provide an innovative model for reshaping digital services. Driven by the movement toward Web 3.0, recent systems started to provide distributed services, such as computation outsourcing or file storage, on top of the currency exchange medium. By allowing anyone to join and collect cryptocurrency payments for serving others, these systems create \emph{decentralized markets} for trading digital resources. Yet, there is still a big gap between the promise of these markets and their practical viability. Existing initiatives are still early-stage and have already encountered security and efficiency obstacles. At the same time, existing work around promising ideas, specifically sidechains, fall short in exploiting their full potential in addressing these problems.

To bridge this gap, we propose \sysname, a secure performance booster for decentralized resource markets. It expedites service related operations, reduces the blockchain size, and supports flexible service-payment exchange modalities at low overhead. At its core, \sysname employs a sidechain, that has a (security and semantic) mutual-dependence with the mainchain, to which the system offloads heavy/frequent operations. To enable it, we develop a \emph{novel sidechain architecture} composed of temporary and permanent blocks, a \emph{block suppression} mechanism to prune the sidechain, a \emph{syncing protocol} to permit arbitrary data exchange between the two chains, and an \emph{autorecovery protocol} to support robustness and resilience. We analyze the security of \sysname, and implement a proof-of-concept prototype for a distributed file storage market as a use case. For a market handling around 2000 transactions per round, our experiments show up to 11x improvement in throughput and 94\% reduction in confirmation time. They also show that \sysname can reduce the main blockchain size by $\sim$90\%, and that it outperforms comparable optimistic rollup solutions by reducing transaction finality by 99.7\%. 
\end{abstract}

\section{Introduction}
\label{sec:intro}
Cryptocurrencies and blockchains provide an innovative model that led to new research frontiers in distributed computing and cryptography~\cite{Abadi20,Goyal20,Bentov14}. Driven by the movement toward the decentralized Internet---Web 3.0, recent cryptocurrency systems started to provide distributed services, such as computation outsourcing, content distribution, or file storage, on top of the currency exchange medium~\cite{filecoin,livepeer}. By allowing anyone to join and collect payments for serving others, these systems create \emph{decentralized markets} for trading digital resources~\cite{login}.\footnote{We focus on open-access decentralized resource markets that employ permissionless blockchains. For brevity, we refer to these as blockchain-based resource markets going forward.}

Blockchain-based resource markets improve early designs of peer-to-peer (P2P) systems by utilizing the properties of blockchains, namely, decentralization, public verifiability, and immutability. They emerged to address the trust, cost, and transparency concerns related to centrally-managed services. For example, in content distribution networks (CDNs), studies~\cite{Anjum17,Karamshuk15} showed that up to 88\% of the network traffic can be offloaded to P2P-based CDNs during peak demands, which improves performance and reduces cost. Others~\cite{content-based} advocated for building a decentralized content-addressed web that is more robust than the current location-based one due to evidence of lost files (a Harvard-led study found that 48\% of all hyperlinks cited in US Supreme Court opinions were broken~\cite{zittrain2014perma}). 

Additionally, blockchain-based resource markets create equitable ecosystems so end users can utilize their excess resources to collect revenue. Moreover, they provide new insights to improve blockchain design. Specifically, they promote the concept of useful mining, in which miners are selected to extend the blockchain based on the amount of service they provide~\cite{filecoin}, rather than based on wasteful computations as in Bitcoin. These potential benefits encouraged the development of many distributed resource markets in practice, such as Filecoin for file storage~\cite{filecoin}, Livepeer for video transcoding~\cite{livepeer}, and Golem for computation outsourcing~\cite{golem}, to name a few. Such systems are viewed as a basic component of Web 3.0 in which centrally-managed services that we currently use are transformed into fully-decentralized ones. 

\begin{table*}[t!]
\caption{Comparison of this work (\sysname), optimistic rollups (Optimism), and ZK rollups (ZkSync Era). Batch period for Optimism and ZkSync are between 30$\sim$60 sec, and 30$\sim$90 sec, respectively~\cite{jumpcryptoBridgingFinality,zksync-era-bf}.}
\vspace{-3pt}
\begin{threeparttable}

\begin{tabular}{|l|l|l|l|l|l|l|l|l|}
\hline
Solution\tnote{*} & Finality & \begin{tabular}[c]{@{}l@{}}  Throughput \end{tabular} & Decentralized &\begin{tabular}[c]{@{}l@{}}  Public \\ Verifiability \end{tabular} &\begin{tabular}[c]{@{}l@{}}  No Verifier\\ Dilemma \end{tabular} & \begin{tabular}[c]{@{}l@{}}  No Trusted \\ Setup\end{tabular} & \begin{tabular}[c]{@{}l@{}} No Smart Contract \\ Dependency \end{tabular} \\
\hline
Optimism~\cite{armstrong2021ethereum,op-verifier,optimism-tx}   & 7 days     & max 11 tx/sec      & \no & \no & \no & \yes & \no  \\
\hline
ZkSync Era~\cite{zksync-era-finality,zksyncWelcomeDocs}                  & 24 hours    & $8-25$ tx/sec & \no           & \yes                 & \yes                 & \no                      & \no                               \\
\hline
\sysname  & $2-30$ min               & $194-776$ tx/sec                                                   & \yes          & \yes                 & \yes                 & \yes                    & \yes                             \\
\hline
\end{tabular}
\begin{tablenotes}
\item[*] For Optimism, we show the maximum observed throughput~\cite{optimism-tx}. Finality means the duration needed to consider state changes induced by rollups or summaries final on the mainchain. In addition to batch processing, in Optimism, this period covers the contestation period; for ZKSync Era, it includes proof generation and verification times; and for \sysname, it is the time needed for transactions to appear in a meta-block. Lastly, \sysname numbers are from the experiments in Section~\ref{sec:perf-eval} for 0.5 and 2 MB sidechain block size and average batch-equivalent duration throughput.
\end{tablenotes}
\end{threeparttable}
\label{tbl:comparison}
\vspace{-6pt}
\end{table*}

\textbf{Challenges.} Unfortunately, there is still a big gap between the promise of blockchain-based resource markets and their practical viability. Blockchain protocols are complex applied cryptographic protocols, and complexity usually breeds vulnerabilities. Dealing with unauthenticated and untrusted participants, and involving monetary incentives, result in several security issues and attacks~\cite{bag2016bitcoin,conti2018survey,mccorry2018smart}. Resource markets are even more involved due to requiring complex modules to allow these parties to trade resources with each other. Not to mention that these markets must meet particular performance requirements, depending on the service type, in order to be a successful replacement to legacy, centrally-managed digital services.  

Most existing resource market initiatives are still early-stage and have already stumbled upon several security and efficiency obstacles. The countermeasures needed to address these threats, such as resource expenditure proofs~\cite{Moran19,Fisch19}, dispute solving, or market matching, introduce extra overhead adding to the scalability issues of blockchains. In many occasions, this led to deployments that favor efficiency at the expense of security; system designers may choose to forgo deploying important security countermeasures to optimize performance, leading to exploitable resource markets, e.g., avoid employing service delivery proofs, which allows servers to collect payments for free~\cite{nucypherBug}. Resolving core design issues at a later stage is also problematic as it leads to hard forks.

\textbf{Limitations of existing solutions.} Although several lines of work emerged to address the performance issues of blockchains, none of them are suitable for resource markets and, if used in this context, impose several limitations. Changing the blockchain parameters, such as increasing the block size, is problematic; it does not solve the scalability issues as all the data is still logged on the mainchain, and thus, is often unfavorable by the community  (e.g., the case of Bitcoin SV hard fork~\cite{bsv}). Micropayment schemes and payment channels~\cite{Decker15, Pass15} improve throughput of currency transfers rather than service data and events. While other layer-two solutions are either not fully decentralized due to the reliance on trusted verifiers (e.g., Optimisic and ZK Rollups~\cite{Poon17,starkware}), or slow due to the extra overhead of involving expensive zero-knowledge (ZK) proofs or repetitive computation and contesting process. We summarize the key differences between existing solutions and \sysname in Table~\ref{tbl:comparison}.

Others developed service-type specific techniques (e.g., lower overhead proof-of-storage~\cite{Fisch18}), or outsourced the work to third-party systems (e.g., TrueBit~\cite{truebit}) that are still early-stage and have their own challenges. These directions produced independent solutions that target specific use cases or systems with given semantics. Thus, there is no clear path on how to import them to other resource markets that provide different service types.

In contrast, existing deployment and research efforts around promising ideas, specifically sidechains, fall short in exploiting their full potential~\cite{Back14,Garoffolo20,Gavzi19,Kiayias19}. They mostly focus on \emph{currency transfer} known as two-way peg (as we explain in Section~\ref{sec:background}), many present only high-level ideas, and do not include rigorous security models or analysis. Moreover, \emph{all} existing sidechain work has only dealt with sidechains that are \emph{independent} of the mainchain, which limits the use cases and performance gains that can be achieved. That is, those solutions cannot be trivially generalized to allow arbitrary data exchange and workload sharing due to this independence; each chain has its own transactions, miners, and tokens, making the delegation of operations from the mainchain to a sidechain, and synchronizing the result back, difficult. Although these were posed to be doable, no concrete sidechain protocol with security and performance analysis exists for that.

\textbf{An open question.} Therefore, we ask the following question: \emph{can we build a generic and secure efficiency solution for blockchain-based resource markets that has a unified architecture and interfaces, but allows for service-specific semantics?}

\subsection{Our Contributions} 
\vspace{-4pt}
We answer this question in the affirmative and propose \sysname, a secure performance booster for blockchain-based resource markets. At its core, \sysname employs a sidechain that has a (security and semantic) mutual-dependence with the market's main blockchain, to which all heavy/frequent service-related operations/transactions are offloaded. We make the following contributions. 

\textbf{System design.} We introduce a novel sidechain architecture that shares workload processing with the mainchain. It is composed of two types of blocks: \emph{temporary meta-blocks} and \emph{permanent summary-blocks}. The service-related workload in the market is handled by this sidechain, while the rest is handled by the mainchain. Instead of operating like a regular blockchain and storing all transactions, we devise a \emph{summarization and syncing protocol} and a \emph{block suppression} mechanism that the sidechain uses. That is, the sidechain records transactions into meta-blocks that are periodically summarized into smaller summary-blocks. The summary-blocks are used to sync the mainchain by updating the relevant state variables using sync-transactions, allowing for arbitrary data exchange between the two chains. Then, the meta-blocks whose summary is confirmed on the mainchain are pruned from the sidechain. This reduces the size of the sidechain as well as the mainchain as only concise summaries are stored permanently. Furthermore, this syncing process enables the mainchain to be the single truth of the system, thus simplifying system state tracking.

To expedite transaction processing, we employ a practical Byzantine fault tolerance (PBFT) consensus, with dynamic committees elected from the mainchain miners, to run the sidechain. Thus, once a transaction appears in a meta-block it is considered confirmed. \sysname is agnostic to the mainchain consensus protocol, and it does not assume a particular PBFT consensus or committee election mechanism. Its modular design permits using any secure protocols that realize these functionalities.

All of these design aspects form a unified architecture and interface of \sysname that can be used with any resource market regardless of the service type it provides. In turn, the summarization process is highly customizable allowing for expressive rules that can be tailored to suit the underlying service. These also promote flexibility as they support various modalities for service-payment exchange, market matching, etc., that can be configured based on the resource market design goals.

\textbf{Resilience and robustness techniques.} Due to the mutual-dependence relation between the main and side chains, interruptions on either of them may impact the other. To address that, we introduce techniques to achieve resilience and robustness. These include: (1) a mechanism to handle mainchain \emph{rollbacks} so the sidechain can tolerate changes in the recent history of the mainchain, and (2) an \emph{autorecovery protocol} to allow the sidechain to automatically recover from security threats related to malicious or unresponsive committees. The latter introduces the notion of \emph{backup committees} that will step in if any of these threats are detected to restore the valid and secure operation of the sidechain.

\textbf{Security analysis.} We analyze the security of our system showing that the deployment of \sysname on top of a secure resource market preserves its security in terms of service-related operations, and safety and liveness of the underlying mainchain. This is critical since we deal with dependent sidechains. As part of this analysis, we formally analyze the security of the autorecovery protocol showing bounds for its configuration parameters and the number of backup committees needed to ensure robustness and resilience against threats.

\textbf{Implementation and evaluation.} To assess efficiency, we build a proof-of-concept prototype of our scheme and conduct thorough benchmarks and experiments. In particular, as a use case, we show how \sysname can substantially improve the performance of a blockchain-based file storage market. Our experiments show that, for a market handling $\sim$2000 transactions per round, \sysname enhances throughput by $4-11$x and reduces its overall latency by $62-94$\%, depending on the sidechain configuration. Our results also show a reduction in the blockchain size by up to $\sim$90\%, and demonstrate support for a variety of server payment modalities at a low cost. We also compare our solution to an optimistic rollup-based solution, and show a $\sim$99.7\% reduction in finality.

Lastly, we note that, to the best of our knowledge, our work is the first to deal with dependent sidechains, and the first to show how sidechains can be used for blockchain pruning. Coupled with the performance gains achieved, we expect our work to advance the current state of the art and promote the practical viability of blockchain-based resource markets without compromising their security.

\section{Background}
\label{sec:background}
\vspace{-4pt}
We review the paradigm of blockchain-based resource markets and the building blocks used in our work, namely, sidechains and PBFT-based consensus.

\vspace{3pt}
\noindent\textbf{Blockchain-based resource markets.} Ethereum is among the earliest systems to provide a distributed service on top of the currency medium. It allows clients to deploy smart contracts and pay miners for the CPU cycles they spend when executing these contracts. Later models~\cite{filecoin,livepeer,golem} expanded the network to involve service-specific participant roles, e.g., servers. Instead of having all miners repeat the same task, as in Ethereum, servers are matched with clients to fulfill their service requests. 

\begin{figure}[t!]
\centerline{
\includegraphics[height= 1.4in, width = 0.85\columnwidth]{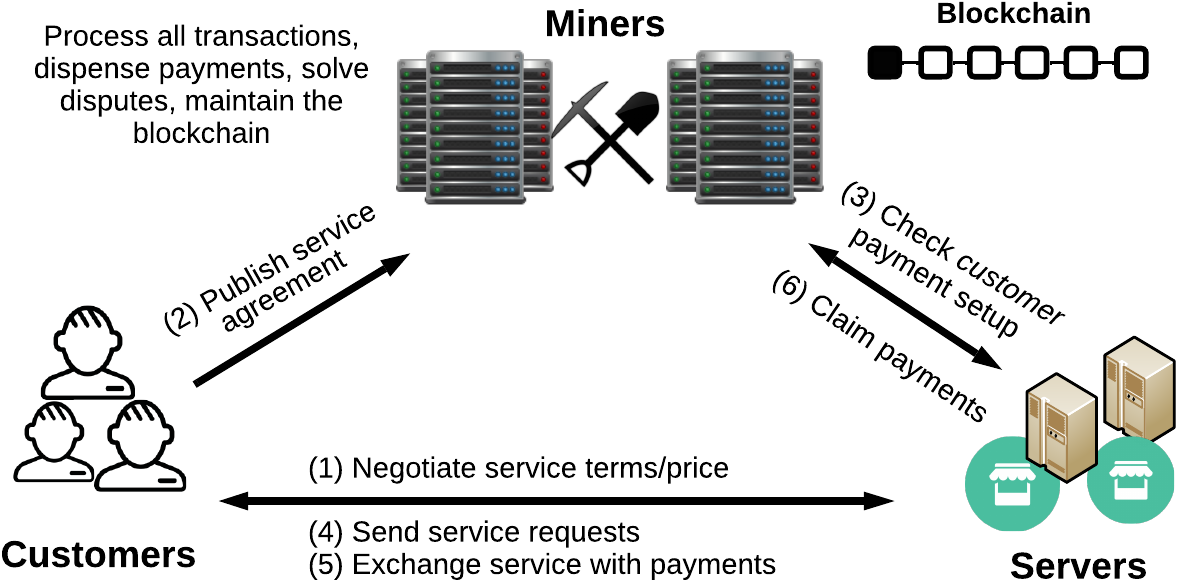}}
\vspace{-2pt}
\caption{A generic resource market model.} \label{resource-market}
\vspace{-8pt}
\end{figure}

A blockchain-based resource market consists of several modules (Figure~\ref{resource-market}): order matching and service term agreement, service-payment exchange, payment processing, and dispute resolution. After negotiating and committing to service terms, clients send requests to servers and pay in return. Service delivery proofs are needed to defend against colluding attackers who may collect payments without doing any work~\cite{Moran19,Fisch19}. Furthermore, due to the impossibility of fair exchange between untrusted parties~\cite{Even80}, additional countermeasures are required for service-payment exchange. For example, clients create escrows for the expected total amount, and miners dispense payments to the servers---out of the escrows---in return for valid service delivery proofs. 

Other solutions may divide the service into small amounts and does the same with payments (known as micropayments~\cite{Pass15}). This reduces financial risks: if a malicious server does not deliver a service, the client loses a small payment; and if a malicious client decides not to pay, the server loses payment for a small service amount. It also promotes flexibility as clients can stop the service at anytime, e.g., as in online gaming or pay-per-minute video streaming. Lastly, solving disputes involves processing cheating claims against participants and penalizing them.

\vspace{3pt}
\noindent\textbf{Sidechains.} A sidechain is a secondary blockchain tied to a mainchain. This term was first coined by Back et al.~\cite{Back14}, who also defined the main use cases of sidechains: supporting interoperability, enhancing scalability, and extending mainchain capability. That is, having a sidechain enables mainchain users to interact with the sidechain, allows for workload sharing, and provides an environment where new functionalities can be deployed/tested.

Despite the large body of work around sidechains (see Section~\ref{sec:related-work}), and that in theory the above use cases are viable, most existing research and deployment efforts focus only on currency exchange, known as two-way peg. That is, the whole goal is allowing clients to exchange two different cryptocurrency tokens between the two chains (call them token $A$ and token $B$). Usually, the mainchain miners are not aware of the sidechain, and the sidechain operation involves succinct proofs proving that some amount of token $A$ has been locked on the mainchain, so that an equivalent amount of token $B$ can be used on the sidechain. Two-way peg is an important operation that may suffice for mainchain interactions with an independent sidechain, the only sidechain type that existing solutions consider. This is exemplified in a firewall security property~\cite{Gavzi19} stating that if a sidechain gets compromised it will not impact the mainchain.

These existing models and notions do not suffice for our purposes; we require a sidechain that has a mutual-dependence relation with the mainchain allowing for arbitrary data exchange, not only currency, thus exploiting the full potential of sidechains. Again, although this is possible in theory, none of the existing research works or practical deployments demonstrate protocols for that, or how this can be used with resource markets.

\vspace{3pt}
\noindent\textbf{PBFT-based consensus.} To increase throughput and reduce confirmation delays, several consensus protocols utilize practical Byzantine fault tolerant (PBFT)~\cite{castro1999practical}. We distinguish two flavors: leader-based consensus, e.g.,~\cite{Kogias16}, and voting-based consensus, e.g.,~\cite{Gilad17}. In both cases, the assumption is that an attacker can corrupt less than one-third of the committee members running the agreement. Moreover, network operation is divided into epochs (an epoch is $k$ consecutive rounds and a round is the time period during which a new block is mined).

In leader-based consensus, the committee leader proposes a block in a round and collects signatures from the committee members indicating their agreement. In voting-based consensus, the committee is divided into block proposers and voters, where each proposer proposes a block that voters vote on (by signing). The block with vote majority will be added to the blockchain. Having a committee, where its size is much smaller than the whole miner population, reduces communication cost and fork probability, and speeds up agreement. Due to these advantages, we adopt a PBFT-based consensus to run the sidechain in \sysname.

Committee election is a crucial component. Obviously, having a static committee is insecure since it is vulnerable to targeted attacks. Thus, the dynamic approach is used in practice: in each epoch a new committee is elected. Since we deal with open-access blockchains, election relies on Sybil-resistant identities. For example, in proof-of-work blockchains, the probability of electing a miner is proportional to the computing power this miner owns (as in ByzCoin~\cite{Kogias16}). To address targeted attacks resulting from knowing the committee members in advance, cryptographic sortition~\cite{Gilad17} in the voting-based model has been introduced.  A party reveals that it was elected after proposing a block or voting (with a proof of being elected), rendering targeted attacks ineffective. Various committee election mechanisms with different trade-offs have been studied in the literature~\cite{Kogias16,pass2017hybrid,Gilad17,kiayias2017ouroboros}.

\sysname's design is agnostic to the consensus type of the underlying mainchain. Also, for its sidechain, \sysname does not assume a particular PBFT-based consensus or committee election algorithm. Any secure protocols that realize these functionalities can be used. To simplify the presentation, we use the leader-based approach when introducing our design.

\section{Preliminaries}
\label{sec:prelim}

\noindent\textbf{Notation.} We use $\lambda$ for the security parameter, and $\param$ for the system public parameters. Each participant maintains a secret and public key, $\sk$ and $\pk$, respectively. We use $\led$ to denote a ledger (or blockchain), $\led_{\mainc}$, and $\led_{\sidec}$ to denote the mainchain and sidechain ledgers, respectively. We use $\ppt$ as a shorthand for probabilistic polynomial time.   

\vspace{3pt}
\noindent\textbf{System model.} 
\sysname is designed to complement and enhance open-access distributed resource markets. Anyone can join, or leave, at any time, and these parties are known using their respective public keys. The resource market mainchain may run any consensus protocol and mining process, and miners establish Sybil-resistant identities based on their mining power. \sysname operates in rounds and epochs (as defined earlier).

A blockchain-based resource market is run by a set of clients $\clients$, servers $\servers$, and miners $\miners$.\footnote{In some models, servers play the role of miners and their mining power is represented by the amount of service they provide in the system.} It maintains a ledger $\led$ and provides the following functionalities:
\begin{description}
    \item[$\syssetup(1^{\lambda}) \rightarrow (\param, \led_0)$:] Takes as input the security parameter $\lambda$, and outputs the system public parameters $\param$ and an initial ledger state $\led_0$ (which is the genesis block).\footnote{For the rest of the algorithms, the input $\param$ is implicit.}

    \item[$\partysetup(\param) \rightarrow (\stt)$:] Takes as input the public parameters $\param$, and outputs the initial state of the party $\stt$ (which contains a keypair $(\sk, \pk)$, and in case of miners, the current view of the ledger $\led$).

    \item[$\createTransaction(\txtype, \aux) \rightarrow (\tx)$:] Takes as input the transaction type $\txtype$ and any additional information/inputs $\aux$, and outputs a transaction $\tx$ of one of the following types:
    \begin{itemize}
        \item $\tx_{\ask}$: Allows a client to state its service needs.
        \item $\tx_{\offer}$: Allows a server to state its service offering.
        \item $\tx_{\agreement}$: The service agreement between a client and a server (or a set of servers).
        \item $\tx_{\serviceProof}$: A service delivery proof submitted by a server.
        \item $\tx_{\servicePayment}$: Payment compensation from a client to a server for the service provided.
        \item $\tx_{\dispute}$: Initiates a dispute-solving process for a particular service misbehavior incident.
        \item $\tx_{\transfer}$: Currency transfer between participants.
    \end{itemize}

    \item[$\verifyTransaction(\tx) \rightarrow (0/1)$:] Takes as input a transaction $\tx$, and outputs 1 if $\tx$ is valid based on the syntax/semantics of its type, and 0 otherwise.

    \item[$\verifyBlock(\led_{\mainc}, \block) \rightarrow (0/1)$:] Takes as input the current ledger state $\led_{\mainc}$ and a new block $\block$, and outputs 1 if $\block$ is valid based on the syntax/semantics of blocks, and 0 otherwise.

    \item[$\updateState(\led_{\mainc}, \{\tx_i\}) \rightarrow (\led_{\mainc}')$:] Takes as input the current ledger state $\led_{\mainc}$, and a set of pending transactions $\{\tx_i\}$. It reflects the changes induced by these transactions and outputs a new ledger state $\led_{\mainc}'$.
\end{description}

Note that resource markets may offer non-service-related transactions other than $\tx_{\transfer}$. The notion above can be extended to cover these other types. Also, the set of transactions needed to operate the service can be extended; the ones in the notion above are intended to be a base set. Lastly, $\updateState$ is the process of mining a new block to reflect the changes induced by the executed set of transactions. Agreeing on this block and how it is mined is governed by the consensus protocol operating the mainchain of the resource market.

\begin{figure*}[ht!]
\centerline{
\includegraphics[height= 1.2in, width = 1.95\columnwidth]{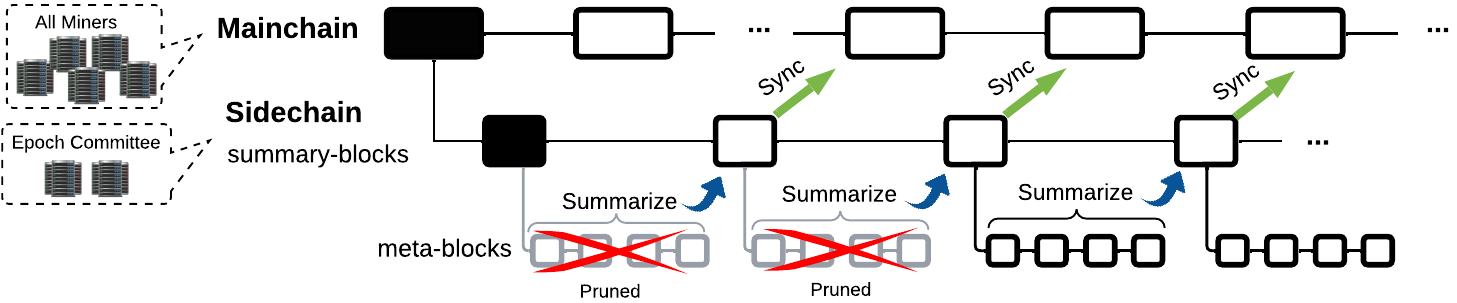}}
\caption{\sysname diagram.} 
\label{chainboost}
\vspace{-8pt}
\end{figure*}

As mentioned before, \sysname operates a sidechain managed by a committee elected from the mainchain miners who runs a PBFT-based consensus to mine new blocks. Furthermore, the sidechain contains two types of blocks: meta-blocks that record transactions, and summary-blocks that summarize meta-blocks mined in an epoch. Also, this committee issues a sync-transaction to sync the mainchain. We abstract these functionalities as follows.
\begin{description}
    \item[$\setup(\led_{\mainc}^0) \rightarrow (\led_{\sidec}^0, \led_{\mainc}')$:] Takes as input the mainchain genesis block $\led_{\mainc}^0$. It outputs the initial sidechain ledger state $\led_{\sidec}^0$ (which is the genesis block of the sidechain containing the sidechain public parameters $\param'$ that include the traffic classification and summary rules). Also, it creates the mainchain state variables that will be synced by the sidechain, thus producing a new mainchain ledger state $\led_{\mainc}'$.

    \item[$\elect(\led_{\mainc}) \rightarrow (\{\com_i\}, \{\leader_i\})$:] Takes as input the current state of the mainchain ledger $\led_{\mainc}$, and outputs a set of committees $\{\com_i\}$ and their leaders $\{\leader_i\}$.

    \item[$\createSyncTransaction(\aux) \rightarrow (\tx_{\sync})$:] Takes as input information $\aux$, and outputs a sync-transaction $\tx_{\sync}$.

    \item[$\verifySyncTransaction(\led_{\sidec},\tx_{\sync}) \rightarrow (0/1)$:] Takes as input the current sidechain ledger state $\led_{\sidec}$ and sync-transaction $\tx_{\sync}$, and outputs 1 if $\tx_{\sync}$ is valid based on its syntax/semantics, and 0 otherwise.

    \item[$\verifyBlock(\led_{\sidec}, \block_{\btype}) \rightarrow (0/1)$:] Takes as input the current sidechain ledger state $\led_{\sidec}$, a new block $\block$ with type $\btype = \meta$ or $\btype = \summary$. It outputs 1 if $\block$ is valid based on the syntax/semantics of the particular block type, and 0 otherwise.

    \item[$\updateState(\led_{\sidec}, \aux, \btype) \rightarrow (\led_{\sidec}')$:] Takes as input the current sidechain ledger state $\led_{\sidec}$, and a set of pending transactions $\aux = \{\tx_i\}$ (if $\btype = \meta$) or $\bot$ (if $\btype = \summary$ since the inputs are meta-blocks from $\led_{\sidec}$). It reflects the changes induced by $\aux$ and outputs a new ledger state $\led_{\sidec}'$.

    \item[$\prune(\led_{\sidec}) \rightarrow (\led_{\sidec}')$:] Takes as input the current sidechain ledger state $\led_{\sidec}$, and produces an updated state $\led_{\sidec}'$ in which all stale meta-blocks are dropped.
\end{description}

Note that $\elect$ may produce only one committee, the primary one that runs the sidechain, or a list of primary and backup committees as in \sysname's design.

\vspace{3pt}
\noindent\textbf{Security model.}
Our goal is to develop a secure efficiency solution---one that preserves the security of the underlying resource market (both its blockchain or ledger and the distributed service it provides). So, starting with a secure blockchain-based resource market, the security properties of this system must be preserved by \sysname. 

\emph{Ledger security.} A ledger $\led$ is secure if it satisfies the following properties~\cite{garay2015bitcoin} (the confirmed state of $\led$ includes all blocks buried under at least $k$ blocks, where $k$ is the depth parameter):
\begin{itemize}
\item Safety: For any two time rounds $t_1$ and $t_2$ such that $t_1 \leq t_2$, and any two honest parties $P_1$ and $P_2$, the confirmed state of $\led$ maintained by $P_1$ at $t_1$ is a prefix of the confirmed state of $\led$ maintained by party $P_2$ at time $t_2$ with overwhelming probability.

\item Liveness: If a transaction $tx$ is broadcast at time round $t$, then with overwhelming probability $tx$ will be recorded on within the confirmed state of $\led$ at time at most $t + u$, where $u$ is the liveness parameter.
\end{itemize}

Based on~\cite{pass2017analysis}, safety (or persistence) covers the consistency and future self-consistency properties, and liveness covers chain growth and quality properties.\footnote{Chain quality is defined as follows: in any sufficiently long list of blocks on $\led$, at least $t$ of these blocks are mined by honest miners where $t$ is a chain quality parameter.} A ledger protocol is parameterized by predicates to verify transaction and chain validity, ensuring that $\led$ records only valid transactions/blocks. For us, this also covers the validity of the additional operations/transactions introduced by the resource market so that only valid ones are accepted.

\emph{On the security of distributed resource markets.} This has been thoroughly studied in~\cite{Almashaqbeh19}, where that work defined threat categories related to the service and the blockchain/cryptocurrency itself. The former mainly includes denial of service, service corruption, service slacking (servers collect payments without all promised service work), and service theft (clients obtain service without paying the servers the amount they agreed on). While the latter covers violating the security properties of the blockchain defined above. A secure resource market means one that protects against all these threats.

The goal of \sysname is to optimize the performance of a given resource market, in terms of throughput, confirmation delays, and blockchain size, using a dependent sidechain. The way the offloaded workload to this sidechain is processed is identical to the logic used by the resource market. Thus, in our security analysis, we show that \sysname and all the techniques it introduces preserve the security of the underlying resource market. 

\vspace{3pt}
\noindent\textbf{Adversary model.}
\sysname runs on top of a secure resource market that operates a secure blockchain as defined previously. We distinguish three miner behaviors; honest who follow the protocol as prescribed, malicious (controlled by an adversary) who may deviate arbitrarily, and honest-but-lazy who collaborates passively with the adversary by accepting records without validation or going unresponsive during consensus.\footnote{We consider lazy miners on the sidechain to account for scenarios where miners are well-motivated to work honestly on the mainchain. But for the sidechain, with its heavy load, they might choose to be lazy in order to use all their resources for providing service or for the mainchain mining to collect more payments/rewards, i.e., the verifier dilemma~\cite{luu2015demystifying}.} The adversary can introduce new nodes or corrupt existing ones, without going above the threshold of faulty nodes of the consensus protocol. The adversary can see all messages and transactions sent in the system (since we deal with public open-access blockchains) and can decide their strategy based on that. This adversary can reorder messages and delay them; we assume bounded-delay message delivery, so any sent message (or transaction) will be delivered within $\Delta$ time as in~\cite{Gilad17,pass2017analysis,Kokoris18}. We assume slowly-adaptive adversaries~\cite{avarikioti2019divide} that can corrupt parties (specifically miners) only at the epoch beginning. Lastly, we deal with $\ppt$ adversaries who cannot break the security of the used cryptographic primitives with non-negligible probability.

\section{\sysname Design}
\label{sec:design}
In this section, we present \sysname starting with an overview of its design, followed by an elaboration on the technical details, limitations, and security.

\subsection{Overview}
\label{sec:overview}
\vspace{-3pt}
\sysname introduces a new sidechain architecture and a set of techniques to securely boost the performance of blockchain-based resource markets. This sidechain has a mutual-dependence relationship with the mainchain as they share the market workload. In particular, all heavy/frequent service-related operations/transactions are processed by the sidechain, while only brief summaries of the resulting state changes are logged on the mainchain.

As shown in Figure~\ref{chainboost} (and Figure~\ref{fig:comparison} shows a detailed view \sysname's impact on processing and storage), the sidechain works in parallel to the mainchain and shares its workload processing and storage. The transactions in the system are classified into sidechain and mainchain transactions. All service-related operations that can be summarized (such as service delivery proofs, service payments, and dispute-solving-related transactions) will go to the sidechain, while the rest will stay on the mainchain. Thus, in the setup phase, system designers determine this classification and the summarization rules (in the next section, we provide guidelines based on the generic resource market paradigm presented earlier). The sidechain is launched with respect to the mainchain genesis block. Both chains operate in epochs and rounds, where the epoch's length determines the frequency of the summaries.

During each epoch, a committee from the mainchain miners is elected to manage the sidechain. The rest of the miners do not track the traffic processed by the sidechain, thus reducing their workload. The sidechain is composed of \emph{two types of blocks}: temporary meta-blocks and permanent summary-blocks. For each sidechain round, the committee runs a PBFT-based consensus to produce a meta-block containing the transactions they processed. In the last round of the epoch, this committee produces a summary-block summarizing all state changes imposed by the meta-blocks within that epoch.

To maintain a single truth of the market, represented by its mainchain, \sysname introduces a periodic \emph{syncing process}. Once a summary-block is published, the committee issues a sync-transaction including the summarized state changes. The mainchain miners process this transaction by updating the relevant state variables on the mainchain. To reduce the storage footprint of the sidechain, and subsequently the mainchain size, \sysname introduces a \emph{pruning mechanism}; when the sync-transaction is confirmed on the mainchain, all meta-blocks used to produce the respective summary-block are discarded.

The syncing and pruning processes have positive impacts on market operation. Maintaining a single source of truth (i.e., the mainchain) simplifies tracking the system state. At the same time, having permanent summary-blocks supports public verifiability as anyone can verify the source of the state changes on the mainchain using the summaries and any other data (that might be published in full in the summary-blocks). Moreover, the pruning process reduces the storage footprint, thus allowing for faster bootstrapping of new miners.

Mutually dependent chains are challenging; if one fails so will the other. Also, network propagation delays and connectivity issues may lead to temporary forks. A question that arises here is how to securely handle rollback situations in \sysname---when a sidechain has already synced with the mainchain but subsequently the recent mainchain blocks are rolled back (i.e., they belong to an abandoned branch of the mainchain). Another question is related to how active or passive attacks, or interruptions, on the sidechain may impact the mainchain. \sysname introduces a \emph{mass syncing} mechanism and an \emph{autorecovery protocol} to handle rollbacks, enable a sidechain to recover after periods of interruption, and allow the mainchain to tolerate these interruptions. We provide more details in Section~\ref{sec:robustness}.

\begin{figure}[t!]
    \centering

    \subfloat[Without \sysname]{%
  \includegraphics[clip,width=\columnwidth]{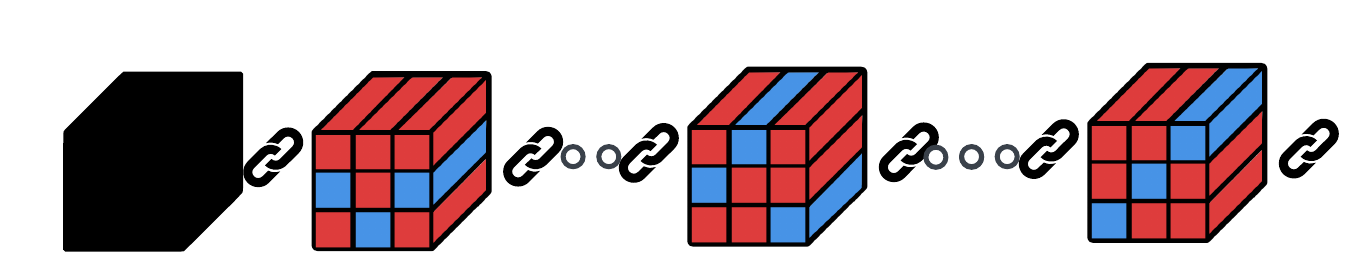}%
}

\subfloat[With \sysname]{%
  \includegraphics[clip,width=\columnwidth]{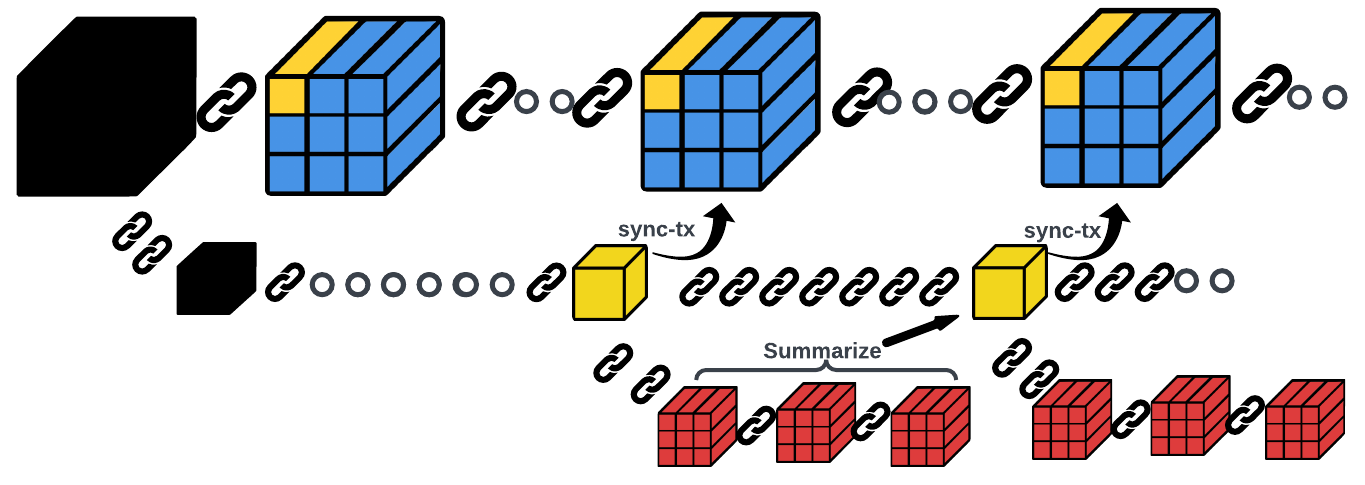}%
  }
    \caption{Mainchain with and without \sysname (service transactions are in red, others are in blue. Summary-blocks and sync-transactions are in yellow).}
    \label{fig:comparison}
\end{figure}

As noted, the design above is generic and not tied to a particular resource market type. It resembles a unified architecture and interface between the main and side chains that any resource market can use. Capturing service-specific semantics is accounted for in the setup of the traffic classification and summary rules. That is, based on full knowledge of the market configuration and operation, service-related transactions are specified and will be handled by the sidechain. The same applies to how this traffic should be summarized; these will generally follow a generic paradigm (that we discuss later) that can be adapted to fit the underlying market. Furthermore, \sysname targets markets with substantial service-related traffic (we show that performance gains are minimal in case most traffic is non-service related and must reside on the mainchain). Moreover, the sidechain configuration, in terms of round and epoch duration, and block sizes, impact the achieved performance gains.  

\textbf{Concrete examples.} As a concrete example of the benefit of \sysname, we look at a file storage network (or any other resource market that involves periodic service delivery proofs). A server periodically engages in a proof-of-storage protocol to prove it stores a client's file. If the proof is valid, the client pays the server for the service provided. Those proof-of-storage transactions are usually plentiful, which slows down the processing of other transactions in the system and consumes space on the blockchain. Using \sysname, the proofs are processed by the sidechain. When generating a summary-block, the committee tallies the proofs generated by each server, and then the mainchain is synced. Thus, the mainchain maintains only a state table of proof counts for active servers. Finally, meta-blocks containing old proof-of-storage transactions whose summary got confirmed on the mainchain are pruned. This greatly reduces the chain size and speeds up transaction confirmation.

Another way \sysname can be beneficial is for the market matching module. In any resource market, clients and servers publish their service demands (i.e., asks) and offers, respectively, and get matched after negotiating the service terms. With \sysname, those demands and offers are recorded on the sidechain. The summary-blocks will only contain finalized service agreements, that get synced back up with the mainchain. This is a viable way to move the service matching from an out-of-band process into a component of the market, enhancing accountability, trust, and transparency. This is done without increasing the burden on the mainchain.

\subsection{Technical Details}
Now we delve into the technical aspects of \sysname's setup phase, operation, architecture, and robustness.

\subsubsection{Setup Phase}
Introducing \sysname changes the way the underlying resource market behaves. Transactions are separated into two categories based on their home chain. Also, a subset of the mainchain miners will be responsible of managing the sidechain on a rotating basis.

\vspace{3pt}
\noindent{\bf Configuration, traffic classification, and summary rules.} 
\sysname builds a unified architecture and interface that can be used for any resource market. At the same time, it permits tailored customization that suits the service type the market provides, which are set during the setup phase (Figure~\ref{fig:syssetup}). \sysname is most beneficial for resource markets with substantial service-related traffic. For configuring the public parameters $\param'$ of the sidechain, larger sidechain block sizes, with short round duration, would be more effective. However, this size should not be larger that typical recommended sizes in practice (e.g., up to 2 MB) and the duration must be enough for the sidechain committee to agree on a block.

\begin{figure}[t!]
\begin{framed}

 $\setup(\led_{\mainc}^0)$: Takes as input the mainchain genesis block (which includes the public parameters $\param$). Generates the sidechain configuration parameters:
 \begin{itemize}
     \item The epoch length $\omega$.   
     \item All public parameters $\param'$ needed by the sidechain protocol.
     \item Traffic classification.
     \item Summary rules.
     \item Summary state variables for the mainchain $\mathsf{summary_{variables}}$.
 \end{itemize}
 
 \vspace{3pt}
 Updates $\led_{\mainc}$ to produce a new state $\led_{\mainc}'$ to include $\param'$ and $\mathsf{summary_{variables}}$.
 
 \vspace{5pt}
 Initializes:
 \begin{itemize}     
     \item $\mathsf{pendingTx_{\mainc}}$, pending transaction pool to keep track of pending mainchain traffic.
     \item  $\mathsf{pendingTx_{\sidec}}$, pending transaction pool to keep track of pending sidechain traffic.
 \end{itemize}

\vspace{3pt}
 Outputs: epoch length $\omega$, sidechain genesis block $\led_{\sidec}^0$ (that references $\led_{\mainc}^0$), and $\led_{\mainc}'$.
 
\end{framed}
\vspace{-6pt}
\caption{System setup.} 
\vspace{-6pt}
\label{fig:syssetup}
\end{figure}

\sysname provides a general framework to distribute the transactions between the mainchain and the sidechain, ensuring effective load sharing. That is, service-related transactions go into the sidechain. They are usually frequent and can be summarized---in terms of the abstract notion we defined these include $\tx_\ask$, $\tx_\offer$, $\tx_\agreement$, $\tx_\servicePayment$, $\tx_\serviceProof$, and $\tx_{\dispute}$. The mainchain processes and records non-service related transactions, e.g. minting new coins and currency transfers, or those that cannot be summarized, e.g., escrow creation. \sysname also provides a flexible framework for summarizing service transactions. For example, service delivery proofs $\tx_\serviceProof$ can be summarized by simply counting them, and this is all that the mainchain needs to track, and the same applies for the accumulated service payments $\tx_\servicePayment$. For solving disputes, the proof-of-cheating transaction and the dispute outcome---validity and the imposed punishment (if any)---resemble the incident summary. We formalize these summary rules in a generic way (based on our abstract model) in Figure~\ref{fig:summary-rules}. Those rules are not fixed; they can be adapted/extended to fit the underlying resource market specifications.

We propose a simple approach that relies on static classification and annotation. After deciding the workload split, \sysname adds an extra field to the transaction header indicating its type, like 00 for mainchain transactions, and 11 for sidechain transactions. This annotation is set in the network protocol and examined by the miners to determine the home chain of each transaction they receive. The summarization rules are encoded into the sidechain protocol, to be used by the committee to produce succinct state changes of the traffic they process. Those rules can be changed at later stages as long as the interpretation of summaries does not change, allowing for greater flexibility while preserving compatibility. For example, a different proof of service delivery construction can be used while the summary remains as the proof count.

Tying that to our abstract model, the transactions $\tx_{\ask}$, $\tx_{\offer}$, $\tx_\agreement$, $\tx_{\serviceProof}$, $\tx_{\servicePayment}$, and $\tx_{\dispute}$ will all be annotated with 11 (i.e., sidechain traffic), while the rest will be annotated with 00 as they should reside on the mainchain. Also, the mainchain now processes a new transaction type, namely, the sync-transaction $\tx_{\sync}$. The outcome of the sidechain setup phase, as shown in Figure~\ref{fig:syssetup}, includes an updated state of the mainchain ledger to record the state variables used for the summary and all sidechain configuration parameters, in addition to the epoch length $\omega$. Furthermore, as the sidechain is tied to the mainchain, its genesis block will reference the mainchain's genesis block.

\begin{figure}[t!]
\begin{framed}

Input: meta-blocks $\block^1_\meta, \dots, \block^n_\meta$ from an epoch.

\vspace{3pt}
Initialize empty summary structures $\summ_\serviceProof$,  $\summ_\servicePayment$, $\summ_\dispute$, and $\summ_\agreement$.

\vspace{3pt}
\forloop{$i \in \{1, \dots, n\}$ and every $\tx \in \block^i_\meta$}

\;\;\;\;\ifcond{ $\tx.\txtype = \tx_{\serviceProof}$ }

\;\;\;\;\; // $\cid$ is the service contract ID

\;\;\;\;\;\; ++ $\summ_\serviceProof[\tx.\cid]$

\;\;\;\;\elsecond \ifcond{ $\tx.\txtype = \tx_{\servicePayment}$ }

\;\;\;\;\;\; $\summ_\servicePayment[\tx.\cid]$ += $\tx.\amount$

\;\;\;\;\elsecond \ifcond{ $\tx.\txtype = \tx_{\dispute}$ }

\;\;\;\;\;\; $\summ_\dispute[\tx.\cid] = (\tx\mathsf{.proof}, \tx\mathsf{.outcome})$ 

\;\;\;\; // $\tx_{\agreement}$ references $\tx_{\ask}$ and $\tx_{\offer}$, so the 

\;\;\;\;\;// summary covers these as well ($\mathsf{s}$ is the server 

\;\;\;\;\;// and $\mathsf{cl}$ is the client)

\;\;\;\;\elsecond \ifcond{ $\tx.\txtype =\tx_{\agreement}$ }

\;\;\;\;\;\; $\summ_\agreement[\tx.\cid]$ = $(\tx\mathsf{.s}, \tx\mathsf{.cl},\tx\mathsf{.terms})$

Output $\summ_\serviceProof$,  $\summ_\servicePayment$, $\summ_\dispute$, and $\summ_\agreement$
\end{framed}
\vspace{-6pt}
\caption{Summary rules (assuming one server per a service contract. If it is a set of servers instead, then indexing should indicate the particular server using $\tx.\cid.\mathsf{s}$).} 
\vspace{-6pt}
\label{fig:summary-rules}
\end{figure}

\vspace{3pt}
\noindent{\bf Managing the sidechain.} Having all miners participate in consensus on both chains is counterproductive. Furthermore, as the goal is to speed up service delivery by reducing transaction confirmation time, the sidechain requires a fast consensus protocol. Therefore, we let the sidechain be managed by a committee elected from the mainchain miners on a rotating basis, and this committee runs a PBFT-based consensus. Although we favor voting-based PBFT to mitigate targeted attacks, to simplify the discussion, we use the leader-based one. Nonetheless, \sysname can be used with either type.

At the onset of each epoch, a new committee is elected to manage the sidechain, which distributes the sidechain workload handling among all miners (i.e., each miner will participate in this task when elected). Any secure committee election can be used, e.g.,~\cite{Gilad17} (as we discuss in Section~\ref{sec:background}). To enable fast handover between committees, all miners in the network maintain a copy of the sidechain.

\subsubsection{Architecture and Operation} 
\label{sub:arch}
When integrated with a resource market, \sysname must preserve the security (safety and liveness) of the market's blockchain and the security/validity of its decentralized service. That is, the system state must be valid, publicly verifiable and available, immutable, and growing over time. We devise a new architecture that allows \sysname to preserve those properties while reducing the overhead of service-related transactions. The operation of \sysname's sidechain is depicted in Figure~\ref{fig:sysoperation}.

\vspace{3pt}
\noindent\textbf{Sidechain blocks and growth.} \sysname's sidechain is composed of temporary meta-blocks and permanent summary-blocks. During each round in an epoch, the committee leader proposes a meta-block, containing sidechain transactions processed in that round, and initiates an agreement to collect votes from the committee. Once enough votes are collected, the block is confirmed and is added to the sidechain. Validating a meta-block proceeds as in PBFT-based consensus; it encompasses verifying block correctness (based on the same verification rules used by the underlying resource market to process service-related transactions), and that valid votes by the legitimate members are provided. 

\begin{figure}[t!]
\begin{framed}
// All miners inspect incoming transactions annotation

\textbf{if} $\mathsf{Annot}(\tx) = 00$ \textbf{then}

\;\;\;\; \textbf{if} $\verifyTransaction(\tx)$ \textbf{then}

\;\;\;\;\;\; Add $\tx$ to $\mathsf{pendingTx_{\mainc}}$

// Done only by sidechain miners

\textbf{elseif} $\verifyTransaction(\tx)$ \textbf{then}

\;\;\;\; Add $\tx$ to $\mathsf{pendingTx_{\sidec}}$
\\\\
// Sidechain miners

// Check if it is the last round in the epoch

\textbf{if} $\mathsf{lastRound}$ \textbf{then}

\;\;\;\; $\led_{\sidec}' = \updateState(\led_{\sidec}, \bot, \mathsf{summary}$)

\;\;\;\; $\block_{\mathsf{summary}} = \led_{\sidec}'.\head$

\;\;\;\; $\tx_{\sync}$ = $\createSyncTransaction(\block_{\mathsf{summary}})$

\textbf{else}

\;\;\;\; $\led_{\sidec}' = \updateState(\led_{\sidec}, \mathsf{pendingTx_{\sidec}, meta}$)
\\\\
// Mainchain miners

\textbf{if} $\verifySyncTransaction(\led_{\sidec},\tx_{\sync})$ \textbf{then}

\;\;\;\; Add $\tx_{\sync}$ to the head of $\mathsf{pendingTx_{\mainc}}$
\\\\
// When mining the next mainchain block, update 

// $\mathsf{variables_{summary}}$ based on $\tx_{\sync}$

$\led_{\mainc}' = \updateState(\led_{\mainc}, \mathsf{pendingTx}_{\mainc})$
\\\\
// When an epoch ends, elect new committees

$(\{\com_i\}, \{\mathsf{leader}_i\}) \leftarrow \elect(\led_{\mainc})$
\\\\
// When $\tx_{\sync}$ is confirmed on the mainchain,

// sidechain miners will prune the sidechain

$\led_{\sidec}' = \prune(\led_{\sidec})$
\end{framed}
\vspace{-6pt}
\caption{System operation.} 
\vspace{-6pt}
\label{fig:sysoperation}
\end{figure}

At the end of the epoch, the leader proposes a summary-block summarizing all meta-blocks in the epoch (based on the summary rules set during the setup phase). Similarly, a summary-block needs the committee agreement to be published on the sidechain. Also, validating this block involves validating the votes and verifying that the summaries are correct based on the summary rules and the content of the corresponding epoch meta-blocks.   

Since \sysname's sidechain has two block types, the way blocks refer to their predecessors, to preserve immutability, is different from regular blockchains. As shown in Figure~\ref{chainboost}, the sidechain genesis block references the mainchain genesis block, the first meta-block in an epoch references the summary-block of the previous epoch, while any other meta-block references its predecessor meta-block in that epoch. Finally, a new summary-block references the last epoch's summary-block.

\vspace{3pt}
\noindent{\bf The syncing process.} The sidechain committee uses the summary-blocks to sync the mainchain without requiring the mainchain miners to re-execute the offloaded operations. Every time a summary-block is published, the leader issues a sync-transaction $\tx_{\sync}$ containing the state changes. Mainchain miners will add this transaction to the head of the pending (mainchain) transaction pool, and will be used to update $\mathsf{variables_{summary}}$ on the mainchain. This transaction will be accepted if it is valid based on the referenced summary-block (recall that all miners have copies of both the mainchain and the sidechain).

\vspace{3pt}
\noindent{\bf Pruning stale blocks.} We introduce a \emph{block suppression} technique to prune the sidechain without compromising its immutability. \sysname can safely drop the meta-blocks once their corresponding $\tx_\sync$ is confirmed on the mainchain. This is done to ensure robust transition between epochs; when $\tx_\sync$ is published in a block buried under at least $k$ blocks on the mainchain, the corresponding meta-blocks can be dropped (so they are not pruned within one epoch period). Given that the summary-blocks are permanent, and that they represent valid summaries produced by a committee with an honest majority (we derive a lower bound for the committee size to guarantee that in Appendix~\ref{appendix:autorecovery-sec-analysis}), public verifiability of the mainchain is not impacted. Anyone can refer to the summary-blocks to verify relevant mainchain state changes.

\subsubsection{Robustness and Resilience}
\label{sec:robustness}
\sysname builds a sidechain that has a mutual-dependence relation with the mainchain. Thus, the security and validity of both chains are tied to each other. To maintain system operation, interruptions on any of these chains must be tolerated by the other. This encompasses handling: (1) block rollbacks on the mainchain, and (2) sidechain downtime from malicious attackers or even lazy honest parties. We devise mechanisms to address these cases.

\vspace{3pt}
\noindent{\bf Handling rollbacks.} Rollbacks happen when mainchain miners switch to a longer branch, causing the most recent blocks to become obsolete.\footnote{Since we assume a secure mainchain, miners may disagree only on the recent history of the blockchain.} This impacts \sysname's sidechain operation on two fronts: first, the committee election if the used mechanism relies on the view of the mainchain, and second, the syncing process if the abandoned blocks contain sync-transactions. 

For the first case, we restrict the view used for committee election to the confirmed blocks on the mainchain. For example, for the sliding window-based approach~\cite{Kogias16}, the committee is composed of the miners of the last $y$ confirmed blocks, and for stake-based election~\cite{Gilad17}, the amount of stake owned by each miner is computed based on the confirmed mainchain history. By doing so, even if a rollback happens, it will not change the sidechain committees elected during the rollback period.

For the second case, we introduce a \emph{mass-syncing} technique where a $\tx_\sync$ captures $x$ summary-blocks instead of one. If a rollback is detected, the current sidechain committee issues a $\tx_\sync$ based on the summary-block not only from the current epoch, but also from previous epochs impacted by the rollback. We set this rollback period to be the time needed to confirm a mainchain block.

\vspace{3pt}
\noindent{\bf Autorecovery protocol.} 
We introduce an autorecovery protocol to detect and recover from any sidechain interruptions.\footnote{Some forms of autorecovery can be found in the literature. For example, in Algorand~\cite{Gilad17}, miners mine empty blocks in the rounds when no agreement is reached to preserve liveness.  However,~\cite{badertscher2020consensus} showed that under a small adversarial spike (over the honest majority assumption) Algorand's network fails to self-heal indefinitely. In Spacemesh~\cite{spacemesh}, self-healing is discussed based on a mesh (layer DAG) blockchain, whereas \sysname does not restrict the mainchain structure.} We provide an overview of this protocol in this section, while a detailed threat modeling and protocol description can be found in Appendix~\ref{appendix:autorecovery}.

\begin{figure}[t!]
    \includegraphics[height= 1.5in, width = \columnwidth]{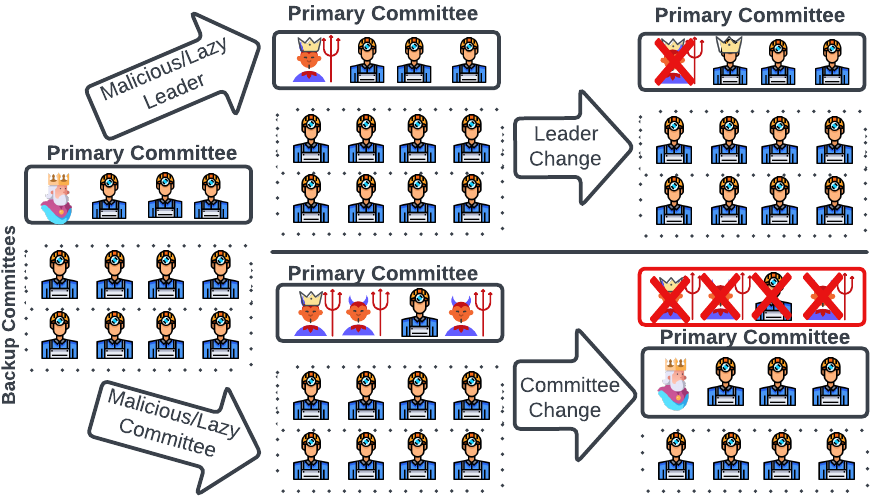}
    \caption{Autorecovery cases and mechanisms}
    \label{fig:autorecovery}
    \vspace{-6pt}
\end{figure}

\emph{Interruption cases.} To be on the conservative side, our threat model considers that a sidechain committee may contain honest miners, malicious miners (their number is denoted as $m$), and lazy (honest) ones (their number is denoted as $l$). As we explain in Appendix~\ref{appendix:autorecovery}, having $m+l \geq 2f+2$ (where $f$ is the maximum number of faulty nodes in the committee) violates safety---the committee might agree on invalid blocks, and having $m+l \geq f+1$ violates liveness---no agreement can be reached. Furthermore, a malicious leader may propose invalid blocks/$\tx_{\sync}$ (note that an invalid $\tx_{\sync}$ can be detected it is verified), and a lazy/unresponsive one may not propose a block or issue a $\tx_{\sync}$. These interruption cases are shown in Figure~\ref{fig:autorecovery}, and our protocol involves two techniques to detect and recover from them.

\textbf{(1)} \emph{Leader change.} We use the view-change technique used in PBFT protocols~\cite{castro1999practical} parameterized by a timeout $\zeta$. Each committee member monitors the behavior of the leader. If an invalid block/$\tx_{\sync}$ is proposed, each member sends a leader-change message. Once an agreement is reached, the next leader candidate will take over (thus, the committee election algorithm not only elects the leader but also the candidates for leader change). The outcome of the process, i.e., the agreed-upon leader-change message, will be published in a meta-block (and consequently in a summary-block) to announce the leader change. 

Furthermore, and to account for the case that committee members themselves could be unresponsive, and hence, no agreement on the new leader is reached, another timeout parameter $\eta$ is used, such that $\eta > \zeta$. If the leader change does not conclude in $\eta$ period, the committee is identified as unresponsive and will be handled by the backup committee technique (discussed below). Moreover, $\eta$ is also used to deal with the case when leader candidates are lazy/malicious, leading to a long sequence of leader changes---putting the committee on hold.

\textbf{(2)} \emph{Backup committees.} To deal with an unresponsive committee or one that may publish invalid blocks, we introduce the concept of backup committees. A backup committee has the same size as the primary committee, and it includes miners elected based on the older history of the mainchain, excluding the primary members. 

This committee monitors the primary committee's behavior and maintain all valid sidechain transactions they receive. When the backup committee detects an invalid block being published on the sidechain, or unresponsive primary committee (i.e., no activity during a timeout $\eta$), it initiates the process of taking over. The backup committee is subject to the same threat model of the primary committee, and could be susceptible to the same threat cases discussed above. Thus, we designate $\kappa$ backup committees, denoted as $\{\com_i\}_{i = 1}^{\kappa}$, each one monitors the one ahead of it in the line and takes over in case of misbehavior. Our protocol is captured in Figure~\ref{fig:autorecovery-pro}.

As shown in the figure, the monitoring period for each backup committee is larger than the one used by the committee ahead of it. That is, the first backup committee $\com_1$ is watching the primacy committee $\com_0$, while other backup committees will step in if the previous backup committee(s) and the primary committee are unresponsive/misbehaving. When the backup committee becomes the primary committee, it resumes operation (after discarding invalid blocks, if any). Moreover, in case syncing did not take place at the end of the epoch, but there is a valid summary-block, the next epoch committee will issue a mass sync transaction for its current epoch and the previous one. Full details can be found in Appendix~\ref{appendix:autorecovery}.

Such autorecovery capability cannot be unconditional; the security is guaranteed under an appropriate protocol configuration (such as the number of backup committees and the committee size). We formally analyze that showing how to configure these parameters in a way that makes the failure probability of our autorecovery protocol negligible (full analysis can be found in Appendix~\ref{appendix:autorecovery-sec-analysis}). 

\begin{figure}[t!]
\begin{framed}
At the beginning of any round in any epoch, each $\com_i$ does the following:
\vspace{3pt}

\textbf{case 1}---lack of progress:

\;\; Set $\led = \led_{\sidec}$

\;\; \textbf{if} $\led = \led_{\sidec}$ at the end of $i \cdot \eta$ duration \textbf{then} 
    
    \;\;\;\; Each member in $\com_i$ issues an unresponsive-committee message. 
\vspace{3pt}

\textbf{case 2}---invalid operation:

\;\; Receive a new $\block_{\mathsf{meta}}$ or $\block_{\mathsf{summary}}$.

\;\; // $\btype$ is either $\meta$ or $\summary$

\;\; \textbf{if} $\verifyBlock(\led_{\sidec},\block_{\btype}) = 0$  \textbf{then}

\;\;\;\; Each member in $\com_i$ issues a misbehaving-committee message.
\vspace{3pt}

\textbf{case 3}---no syncing:

\textbf{if} no $\tx_{\sync}$ in $\led_{\mainc}$ at the end of the epoch  \textbf{then}

\;\;\;\; Resort to mass-sync in the next epoch.

\vspace{6pt}
// Committee take over

For case 1 and 2, once an agreement is reached: 

\textbf{for} $i \in \{1, \dots, \kappa\}$
\;\; Set $\com_{i-1} = \com_{i}$
\end{framed}
\vspace{-6pt}
\caption{Autorecovery protocol---backup committees.} 
\vspace{-6pt}
\label{fig:autorecovery-pro}
\end{figure}

\subsubsection{Limitations and potential extensions}
\label{sec:limit}
We discuss limitations and potential extensions to our system. These are related to the applicability of \sysname, the overhead of its autorecovery protocol, the miner incentives to maintain both chains, and the issue of transaction fees based on the temporary storage aspect of the meta-blocks. We discuss these issues in Appendix~\ref{appx:limit}.

\subsection{Security}
As \sysname introduces a dependent-sidechain that shares the workload with the mainchain, and a block pruning mechanism, we have to show that, under this new architecture, the security of the resource market is preserved. In Appendix~\ref{appx:chainboost-security}, we prove the following theorem. 

\begin{theorem}
    \sysname preserves the safety and liveness (cf. Section~\ref{sec:prelim}) of the underlying resource market.
\end{theorem}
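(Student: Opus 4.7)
The plan is to decompose the theorem into two halves, safety and liveness, and treat the mainchain and sidechain in tandem, exploiting the assumption that the underlying market is already secure. Since \sysname only modifies the market by (i) classifying certain transactions as sidechain traffic, (ii) inserting summary and sync transactions back onto the mainchain, and (iii) pruning stale meta-blocks, the proof reduces to showing that each of these three modifications preserves the safety and liveness predicates of Section~\ref{sec:prelim}, both for the mainchain view and for the effective ledger seen by the resource market (mainchain $\cup$ summary-blocks).

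For \textbf{safety}, I would argue first that the mainchain's safety is inherited directly: \sysname does not alter the mainchain consensus or its block validity predicate, it only adds one new transaction type $\tx_{\sync}$ whose $\verifySyncTransaction$ check is deterministic given $\led_{\sidec}$, so honest miners accept the same prefix as before. Next, I would show that the sidechain itself satisfies safety under the PBFT-based consensus and the honest-majority committee election, invoking the bound on committee size and backup count derived in Appendix~\ref{appendix:autorecovery-sec-analysis} to conclude that at most a negligible fraction of epochs produce invalid or conflicting meta/summary blocks. The mutual-dependence concerns are handled by the rollback technique (committees elected only from confirmed mainchain history) and by mass-syncing, which together guarantee that whenever honest parties $P_1, P_2$ observe $\led_{\mainc}$ at times $t_1 \leq t_2$, the induced summary state is also consistent in the prefix sense. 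Finally, I would justify that $\prune$ does not violate safety: a meta-block is dropped only after its $\tx_{\sync}$ is buried under $k$ mainchain blocks, so the summarized state is already in the confirmed prefix, and the permanent summary-block plus the mainchain sync-transaction jointly fix the induced state change forever, preserving immutability.

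For \textbf{liveness}, I would track a generic transaction through the system. If it is a mainchain-annotated transaction, the mainchain's native liveness parameter $u$ applies unchanged. If it is a sidechain-annotated transaction, I would bound the time until it is included in a meta-block by $\Delta$ plus the PBFT round time, then bound the time until its effect on $\mathsf{variables_{summary}}$ is confirmed on the mainchain by at most one epoch length $\omega$ plus the mainchain liveness parameter, plus the mass-sync delay if a rollback occurred. The autorecovery protocol must be invoked here: using the timeouts $\zeta$ and $\eta$ and the $\kappa$ backup committees, I would argue that with overwhelming probability at least one committee in the chain is honest within a bounded number of takeovers, so the transaction is eventually processed. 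The mass-sync mechanism absorbs the bounded delay introduced by leader changes or backup takeovers, giving a finite, albeit larger, effective liveness parameter $u'$.

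The main obstacle I anticipate is the interaction between the autorecovery protocol and safety across epoch boundaries: specifically, showing that when a backup committee takes over mid-epoch after a malicious primary committee has already published meta-blocks, the discard-and-resume logic yields a unique confirmed sidechain prefix and does not leave honest parties with divergent views of the soon-to-be-pruned blocks. I would handle this by leaning on the honest-majority bound for backup committees proved in Appendix~\ref{appendix:autorecovery-sec-analysis} and on the fact that pruning is gated by mainchain confirmation of $\tx_{\sync}$, which serves as a synchronization point. The rest of the argument is largely bookkeeping: checking that every rule in Figure~\ref{fig:summary-rules} is a deterministic function of confirmed meta-block contents, and that $\verifyBlock$, $\verifyTransaction$, and $\verifySyncTransaction$ together enforce exactly the same validity predicates that the underlying market would have enforced had all transactions been processed on a single chain.
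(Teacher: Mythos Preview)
Your proposal is correct and invokes the same technical ingredients as the paper (PBFT security with the committee-size bound from Appendix~\ref{appendix:autorecovery-sec-analysis}, the autorecovery protocol, mass-syncing for rollbacks, and pruning gated by mainchain confirmation of $\tx_{\sync}$), but you organize the argument along a different axis. The paper's proof is threat-enumeration-based: it splits Theorem~1 into two lemmas, and in each lemma it lists the concrete resource-market attacks that the sidechain could newly enable---for safety, service slacking/theft via pruned proofs, an out-of-sync mainchain, invalid $\tx_{\sync}$, and corrupted sidechain quality; for liveness, targeted DoS by the committee, a stalled sidechain, and loss of public verifiability---and then dispatches each threat by pointing to the mechanism that neutralizes it. You instead work directly from the formal ledger predicates of Section~\ref{sec:prelim}, verifying the prefix property and bounded-delay inclusion for the combined mainchain/summary view and deriving an explicit effective liveness parameter $u'$. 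Your decomposition reads more like a standard ledger-security reduction and would compose cleanly with other persistence/liveness results; the paper's decomposition follows the ABC threat-modeling style it cites and makes the service-level semantics (e.g., why discarding PoR meta-blocks does not enable slacking claims) more visible. Both routes are sound here because they bottom out in the same primitives; the obstacle you flag about mid-epoch backup takeover is handled in the paper only implicitly, via the same gating-by-confirmation argument you propose.
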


\section{Implementation}
\label{sec:implementation}
To assess the performance gains of \sysname, we implement a proof-of-concept and conduct experiments testing several impactful factors. In doing so, we chose a file storage resource market as a use case inspired by many real-world Web 3.0 systems building such service, e.g., Storj~\cite{website:storj}, Sia~\cite{website:sia}, and Filecoin~\cite{website:filecoin}. We discuss the implementation details in this section, and the experiment setup and evaluation results in the next section. Our implementation can be found at~\cite{chainboost-repo}.

\textbf{Sidechain implementation.}
We implemented our sidechain architecture in Go, including meta-blocks, summary-blocks, the summary rules, the syncing process, chain pruning, and the transaction annotation indicating the traffic split between the chains. In our implementation, each miner maintains a copy of the mainchain and the sidechain. Miners on the sidechain committee maintain a separate queue for each chain pending transactions (mainchain miners follow only the mainchain traffic).

For the PBFT protocol run by the sidechain, we use the collective signing (CoSi)-based PBFT-based consensus using BLSCoSi from Cothority~\cite{github:cothority}. For the communication between miners, we employ the Cothority Overlay Network Library (Onet)~\cite{github:onet}. For committee election, we use the sliding window mechanism from~\cite{Kogias16}. We chose this simple approach as it suffices to show the performance gains that \sysname can achieve. In the bootstrapping phase of the sliding window, our implementation picks the members of the committee at random. Later on, each potential committee member is added to the committee once, even if it mined multiple mainchain blocks within the sliding window. Finally, the most recently added committee member will be the committee leader for the epoch. 

Each sidechain round proceeds as follows. The leader creates a block (a meta-block, or if it is the epoch end, a summary-block) and initiates an agreement using the CoSi-based PBFT. When an agreement is reached, the leader publishes the block, along with the aggregated committee signature produced via the CoSi process, on the sidechain. At the end of the epoch, the leader issues a sync-transaction, based on the summary-block, and broadcasts it to the mainchain miners. This transaction has the highest priority, thus it will be placed first in the mainchain pending transaction queue, for fast processing.

\textbf{Use case: file storage resource market.}
We use a decentralized file storage network (inspired by Filecoin~\cite{filecoin}) as a use case exemplifying a resource market to show the benefits that \sysname can achieve. In this market, there are two types of participants: the storage providers or servers (who also serve as miners since the amount of storage service they provide represents their mining power), and clients who want to use the service. In our implementation, this market is composed of two modules:

\emph{Market matching module}: A client asks for service for a specific duration, file size and price using a propose-contract transaction (which also creates an escrow to pay for the service). The server answers with a commit-contract transaction matching a specific client offer. Both of these transactions are published on the mainchain. In our implementation, offer and demand generation is automatically done, and whenever a contract expires, a new set of propose and commit transactions are issued.

\emph{Service-payment exchange module}: A server proves that it is storing a file by issuing a proof-of-storage every mainchain round. In our proof-of-concept, we use compact (non-interactive) proof-of-retrievability (PoR)~\cite{shacham2008compact} as a proof-of-storage (for completeness, we discuss how these proofs work in Appendix~\ref{appendix:por}). For implementing this protocol, we use the bilinear BN256~\cite{naehrig2010new} group of elliptic curves for BLS signatures from the Kyber library~\cite{kyber}. As in Filecoin, by default, the payment for all valid PoRs is dispensed once the storage contract ends (we test different server payment modalities in the next section).

\textbf{Mainchain implementation.} The mainchain has one block type; mainchain blocks. Without \sysname, all transactions are logged in these blocks, but with \sysname, these blocks contain mainchain transactions along with concise summaries of the sidechain workload. Inspired by Filecoin, mining on the mainchain relies on the storage amount the servers provide, i.e., mining power. Thus, for each mainchain round, a miner will be selected to mine the next block in proportion to the amount of service it provides. This selection is done using cryptographic sortition similar to~\cite{Gilad17}. That is, each miner evaluates a Verifiable Random Function (VRF) using a random seed computed from the previous round and compares it to their mining power. If the VRF output is lower than the mining power, the miner is a potential round leader. It is possible to have multiple leaders in the same round. To resolve this issue, the miner with the first announcement will be the round leader.\footnote{This is a simplifying assumption that can be comparable to Algorand's priority-based work distribution~\cite{Gilad17}. Here, the priority is given to the first leader with a valid block.} We use Algorand's implementation of VRFs~\cite{github:algorand}. Recall that \sysname's performance is agnostic to the mainchain consensus, thus this simplified version is enough for our purposes.

\textbf{Traffic generation.} We apply the same traffic distribution observed in Filecoin using~\cite{site:filfox}. That is, 98\% of the traffic is service-related transactions and 2\% are payment (currency transfer) transactions. The majority of the service-related traffic is composed of proof-of-storage transactions (the rest are for contract-propose/commit transactions), which are proof-of-retrievability transactions in our case. Transactions are added at the beginning of each mainchain round. A detailed explanation of our traffic distribution and generation, as well as transaction structure and sizes, can be found in Appendix~\ref{apdx:traffic}.

\textbf{Optimistic rollups comparison.} To provide a baseline comparison with a layer-two solution, we implement an optimistic rollup solution inspired by optimism~\cite{optimism}, that we dub \opBoost and apply it to the base resource market. In each round, service-related transactions are batched up and processed off-chain. Then the state changes induced by a finalized batch are logged on the mainchain.

\section{Performance Evaluation}
\label{sec:perf-eval}
We now discuss our experiment setup and results showing the performance gains of \sysname.\medskip

\noindent\textbf{Experiment setup.}
We deploy our system on a computing cluster composed of 10 hypervisors, each running a 12-Core, 130 GiB RAM, VM, connected with 1 Gbps network link. Our experiments, unless stated otherwise, are conducted over a network of 8000 nodes, with each node serving two contracts. An experiment length is 61 mainchain rounds, with an epoch length $k = 10$ mainchain rounds---equivalent to 30 sidechain rounds (we use mc-round and sc-round to denote mainchain and sidechain round, respectively), with a committee size of 500 miners. As discussed earlier, 2\% of the traffic is payment transactions while the rest are service-related transactions, as in Filecoin. Our mainchain and sidechain block sizes are 1MB, and we collect experiment data using SQLite~\cite{website:sqlite}.

We test the impact of several parameters, including traffic distribution, block size, the ratio of sidechain rounds to mainchain rounds in an epoch, and various server payment modalities. For performance comparison purposes, we show results for the file storage network performance with and without \sysname (this will allow us to quantify the performance gains that our system can achieve).\footnote{To the best of our knowledge, our system is the first solution that target resource markets, and the first to introduce a dependent sidechain. Thus, it is incomparable to existing sidechain solutions that restrict themselves to two-way peg and independent sidechains.}

In reporting our results, we measure the following metrics (some experiments also report blockchain size):
\begin{itemize}
\item \emph{Throughput:} This metric corresponds to the number of transactions processed per a mainchain round. It is computed as the average number of transactions that appear in a mainchain block (or round) plus the average number of transactions in all sidechain blocks published during the same mainchain round.

\item \emph{Confirmation time:} This is the time (in mainchain rounds) that takes a transaction to be confirmed from the moment it enters the queue. As we use PBFT on both chains, once a transaction is published in a block it is considered confirmed.
\end{itemize}

We note that after an experiment run ends, we continue processing any traffic left in the queues until those queues are empty, and measure the metrics above accordingly.

\vspace{3pt}
\noindent\textbf{Scalability.} In this experiment, we investigate the behavior of \sysname as we increase the network load by increasing the number of service contracts. In each run, we set the contracts to be \{2K, 4K, 8K, 16K, 32K, 64K, 512K\}. Each contract results in a PoR transaction per mainchain round, a propose and commit contract transactions (when the contract is created), and a service payment transaction when the contract expires. We report throughput, latency, and blockchain size of the file storage market with and without \sysname. We also report the number of newly serviced (or active) contracts per round (note that a new contract becomes active once its contract-commit transaction is confirmed). The results are shown in Figure~\ref{fig:scale}.

\begin{figure}[tp]
    \centering
    \begin{subfigure}[t]{0.45\linewidth}
        \centering
        \includegraphics[width=\linewidth]{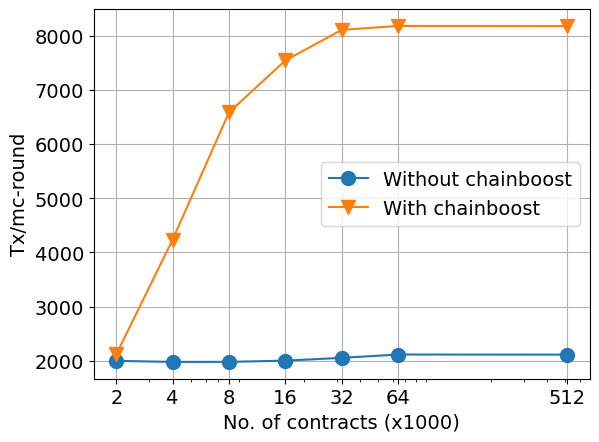}
        \caption{Throughput}
        \label{fig:throughput-scale}
    \end{subfigure}%
    \begin{subfigure}[t]{0.45\linewidth}
        \centering
        \includegraphics[width=\linewidth]{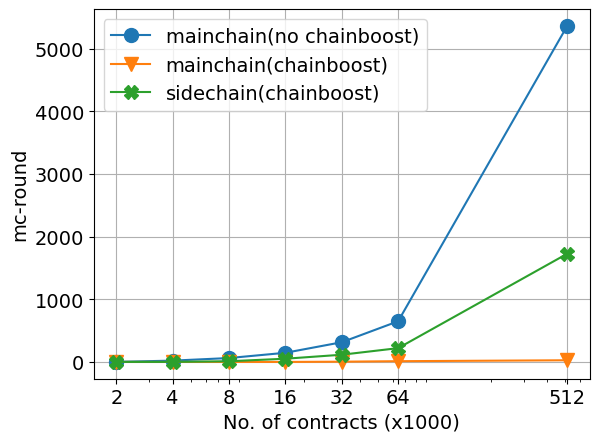}
        \caption{Confirmation time}
        \label{fig:latency-scale}
    \end{subfigure}
    \begin{subfigure}[t]{0.45\linewidth}
        \centering
        \includegraphics[width=\linewidth]{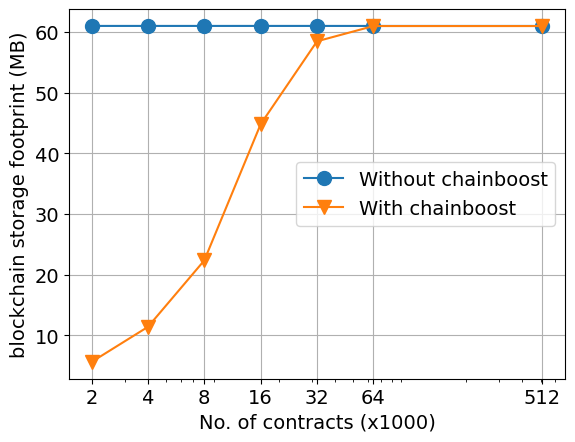}
        \caption{Storage footprint}
        \label{fig:footprint-scale}
    \end{subfigure}
	  \begin{subfigure}[t]{0.45\linewidth}
        \centering
        \includegraphics[width=\linewidth]{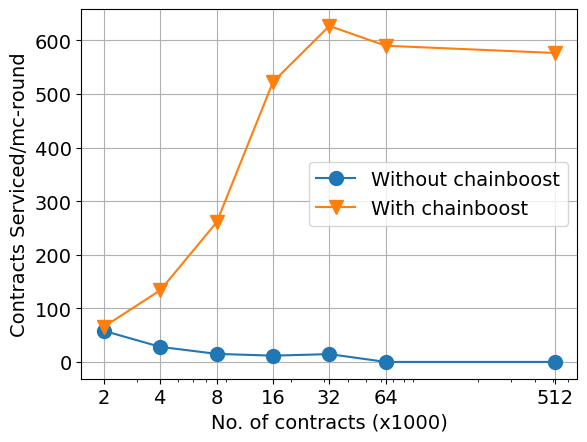}
        \caption{Serviced contracts}
        \label{fig:service-scale}
    \end{subfigure}
\caption{Scalability experiment results (for 1 MB block, and 3 sc-rounds per mc-round for \sysname).}
\label{fig:scale}
\vspace{-6pt}
\end{figure}

As shown, \sysname significantly improves the performance of the file storage. Throughput doubles as we double the number of contracts from 2K to 4K until it peaks at $\sim$4x the throughput of the base file storage market (i.e., the one without \sysname) for higher numbers of contracts. Improvements are even more noticeable when we examine the confirmation time as seen in Figure~\ref{fig:latency-scale}. Besides the almost-zero\footnote{In our implementation the mainchain and sidechain traffic is added at the beginning of the round. So an instant transaction publication, or almost zero confirmation delay, means a transaction that was queued and processed in the same round.} confirmation time on the mainchain we get by moving the frequent transactions to the sidechain, the confirmation time growth on the sidechain is not as steep as on the base file storage. Thus, even if the number of transactions increases significantly, they are processed significantly faster with \sysname.

For the blockchain size (Figure~\ref{fig:footprint-scale}), we observe that for networks with less than 32K contracts, \sysname reduces the amount of data stored on the blockchain, going from a $\sim$90\% reduction for 2K contracts to 0.3\% reduction for 32K contracts (for the latter, the huge load causes the network to be saturated with non-sidechain traffic, still with \sysname these transactions are confirmed faster than the base file system with \sysname). \sysname also allows more contracts to be serviced each round (Figure~\ref{fig:service-scale}), since \sysname allows the transactions that activate expired contracts to be confirmed at a fast rate. Overall, with \sysname, a file storage network can achieve higher throughput, lower confirmation time, and provide a faster service with a smaller blockchain storage footprint, thus greatly improving the market's scalability.\medskip

\noindent\textbf{Impact of block size.} We examine the impact of the mainchain and sidechain block sizes on performance; we set the size to $\{1, 2, 4, 8\}$ MB, and $\{0.5, 1, 1.5, 2\}$ MB for the mainchain and sidechain, respectively.

Throughput-wise, in Table~\ref{tbl:throughput-bs}, we observe that applying \sysname, even at the smallest block size setting, improves the throughput of the base system for small mainchain block sizes (1 and 2 MB). For a mainchain with 1MB block size, throughput is improved by a factor of 2.36x - 6.6x, which increases as the sidechain block size grows. For larger mainchain block sizes, we notice that \sysname with small block sizes has a smaller impact on throughput. This is because that traffic amount is fixed for all setups, the mainchain with a large block size can handle all this traffic. However, in practice, a mainchain block size of less than 8MB is usually used, and the results indicate that system designers are better off with using a large block size with the \sysname's sidechain.

From confirmation time, in Table~\ref{tbl:latency-bs}, we find that \sysname improves confirmation time significantly. This is because the queues are not clogged with PoR transactions so other transactions are confirmed almost instantly on the mainchain, and the latency on the sidechain is improved as its block size increases as it allows for processing a larger number of transactions per sc-round. 

\begin{table}[t]
    \begin{center}
        \caption{Average throughput (in $\times1000$ tx/mc-round) with and without \sysname for different block sizes.}
        \label{tbl:throughput-bs}
        \resizebox{\linewidth}{!}{
            \begin{NiceTabular}{ c c | c | c | c | c | c|}[vlines, corners=NW]
                \cline{3-7}
                & &  \Block{2-1}{w/o \sysname}   & \multicolumn{4}{c|}{ w/ \sysname---SC Block Size (MB)}\\
                \cline{4-7}
                & & & 0.5 & 1 & 1.5 & 2\\ 
                \midrule
                \Block{*-1}{\rotatebox[origin=c]{90}{\parbox{2cm}{\centering MC Block\\ Size (MB)}}}
                & 1 & 2.00 & 4.73 & 7.54 & 10.36 & 13.17\\  
                \cline{2-7}
                & 2 & 3.96 & 4.73 & 7.54 &  10.36 & 13.17 \\ 
                \cline{2-7}
                & 4 & 7.910 & 4.73 & 7.54 &  10.36& 13.17 \\ 
                \cline{2-7}
                & 8 & 15.98 &  4.73 & 7.54 &  10.36 & 13.17 \\ 
                \hline
            \end{NiceTabular}
        }
    \end{center}
    \vspace{-3pt}
\end{table}

\begin{table}[t]
    \begin{center}
        \caption{Average confirmation time (in mc-rounds) for mainchain/sidechain traffic under different block sizes.}
        \resizebox{\linewidth}{!}{
            \begin{NiceTabular}{ c c | c | c| c | c | c | c|}[vlines, corners=NW]
                \cline{3-8} 
                & & \Block{2-1}{ w/o \sysname}   & \multicolumn{5}{c|}{ w/ \sysname---SC Block Size (MB)}\\
                \cline{4-8}
                & &   & Chain& 0.5 & 1 & 1.5 & 2\\ 
                \midrule
                \Block{*-1}{\resizebox{6.8pt}{!}{\rotatebox[origin=c]{90}{MC  Block Size (MB)}}}
                & \multirow{2}{*}{1}  & \multirow{2}{*}{145.88} & MC & 0.03 & 0.03 & 0.03 & 0.03\\  
                \cline{4-8}
                & & & SC & 132.59 & 51.41 & 24.46 & 10.92\\  
                \cline{2-8}
                & \multirow{2}{*}{2} &  \multirow{2}{*}{61.81} & MC & 0 & 0 &  0 & 0 \\ 
                \cline{4-8}
                & & & SC & 132.59 & 51.41 & 24.46 & 10.92\\  
                \cline{2-8}
                & \multirow{2}{*}{4} & \multirow{2}{*}{20.74} & MC & 0 & 0 & 0 & 0\\
                \cline{4-8}
                & & & SC & 132.59 & 51.41 & 24.46 & 10.92\\  
                \cline{2-8}
                & \multirow{2}{*}{8} & \multirow{2}{*}{0.92} & MC & 0 & 0 & 0 & 0 \\
                \cline{4-8}
                & & & SC & 132.59 & 51.41 & 24.46 & 10.92\\  
                \hline
            \end{NiceTabular}
        }
        \label{tbl:latency-bs}
    \end{center}
    \vspace{-6pt}
\end{table}

When we apply the results of Table~\ref{tbl:latency-bs} to blockchains that have a similar block rate as Bitcoin (i.e. 1 block every 10 minutes) and a block size of 1MB, the confirmation time will go from around 1.02 days to an almost instant confirmation for mainchain transactions, to around 8, 4 and 1.8 hours for sidechain transactions, when using a 1, 1.5, 2 MB sidechain block sizes, respectively. If we apply the same analysis to chains using the Ethereum block rate (1 block every 12 seconds), the confirmation time on the sidechain drops from around 30 min to around 10, 5, and 2 minutes, for the same sidechain block sizes.

Overall, by using \sysname to process frequent and heavy service-related traffic, throughput and latency depend on the sidechain configuration. Thus, \sysname adopters should pick a value for the sidechain block size based on the expected traffic load.

\vspace{3pt}
\noindent\textbf{Number of sidechain rounds per epoch.}
In all previous experiments, an epoch lasted for 10 mainchain rounds, or 30 sidechain rounds. In this experiment, we use the same number of mainchain rounds per epoch but we change the number of sidechain rounds in an epoch to 40, 60, 80, and 100. (Note that a large number of sidechain rounds per epoch is possible with blockchains that have long rounds.) We measure performance for each configuration, repeating the experiment for different traffic loads.

As shown in Figure~\ref{fig:throughput-ratio}, throughput increases until the network is no longer saturated with transactions. For example, for a network with 8K contracts, throughput maxes out at around 8.4K transactions/mc-round, when the sidechain epoch size/mainchain epoch size (in rounds) ratio is 4. After that, we start observing empty blocks on the sidechain for higher ratios. We can extrapolate that for 32K, where throughput maxes out at a ratio of 16.

The sidechain confirmation time decreases as the ratio of sc-rounds to mc-rounds increases (Figure~\ref{fig:latency-ratio}). As the number of sidechain rounds per epoch increases, a larger number of transactions can be published, reducing the waiting time (in terms of mc-rounds) of service transactions in the sidechain queue. If we compare these numbers to the base market without \sysname, under the same configuration, from previous experiments, we observe that we can achieve around 4x - 11x improvement in throughput and 62.5\% - 94\% decrease in confirmation time. Thus, the ratio of sidechain rounds per mainchain round should be defined in a way that captures the expected load to achieve high-performance gains when \sysname is used.

\begin{figure}[t!]

    \centering
    \begin{subfigure}[t]{0.5\linewidth}
        \centering
        \includegraphics[width=\linewidth]{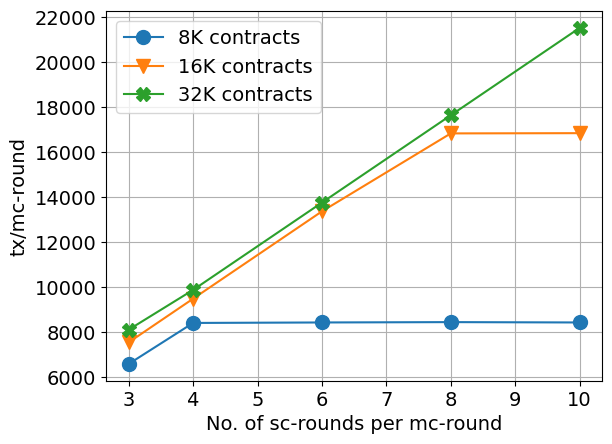}
        \caption{throughput}
        \label{fig:throughput-ratio}
    \end{subfigure}%
    \begin{subfigure}[t]{0.5\linewidth}
        \centering
        \includegraphics[width=\linewidth]{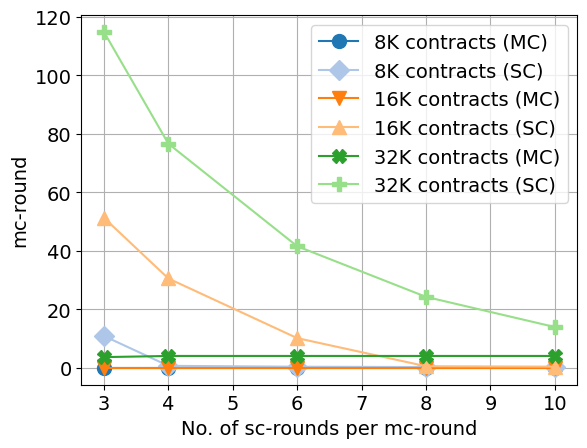}
        \caption{confirmation time}
        \label{fig:latency-ratio}
    \end{subfigure}
\vspace{-3pt}
\caption{\sysname under various sidechain-to-mainchain round ratios}
\vspace{-4pt}
\end{figure}

\begin{figure}[t!]

    \centering
    \begin{subfigure}[t]{0.5\linewidth}
        \centering
        \includegraphics[width=\linewidth]{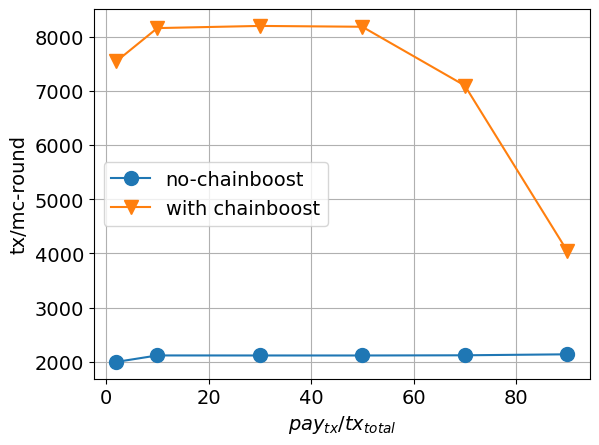}
        \caption{throughput}
        \label{fig:throughput-dist}
    \end{subfigure}%
    \begin{subfigure}[t]{0.5\linewidth}
        \centering
        \includegraphics[width=\linewidth]{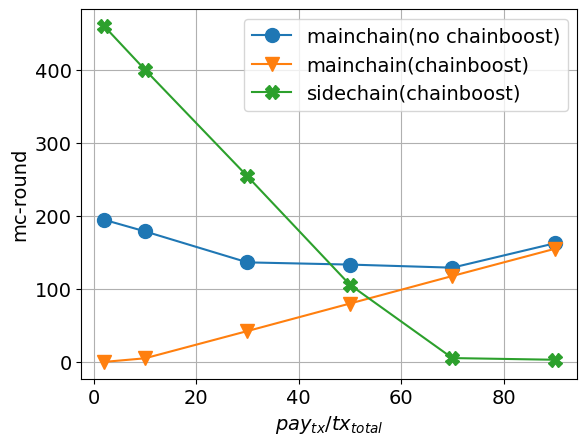}
        \caption{confirmation time}
        \label{fig:latency-dist}
    \end{subfigure}
\vspace{-3pt}
\caption{\sysname under various traffic distributions.}
\vspace{-6pt}
\end{figure}

\vspace{3pt}
\noindent\textbf{Traffic distribution.}
In this experiment, we fix the number of total transactions we generate but we vary the distribution of payment transactions, making them $\{2, 10, 30, 50, 70, 90\}\%$ of the total transactions while the rest are service-related ones. Throughput-wise, we observe in Figure~\ref{fig:throughput-dist} that \sysname achieves maximum throughput when payment transactions are within 10\% - 50\%, after which throughput drops. However, \sysname is still able to achieve 2x throughput improvement when payment transactions reach 90\% of the total traffic. This is due to the difference in size between payment transactions and service transactions; space occupied on the mainchain by service transactions without \sysname can fit more payment transactions, thus increasing throughput.

For confirmation time, we observe in Figure~\ref{fig:latency-dist} that as the ratio of payment transactions increases, confirmation time on the mainchain increases, while the one on the sidechain decreases. This is due to the increase in the mainchain workload while the sidechain one decreases. Hence, for blockchain systems where non-service-related traffic is dominant, \sysname provides marginal improvements. Nonetheless, in practice, distributed resource market have a large workload of service transactions, which is where \sysname should be used.

\vspace{3pt}
\noindent\textbf{Server payment modalities.}
Paying for each proof-of-service delivery results in a huge number of service payment transactions, which overwhelm the system. Thus, markets aggregate these payments before processing (either pay at the end of a contract, as in Filecoin, or resort to deploying additional machinery). In this experiment, we show how \sysname can support payment aggregation at low overhead. In particular, we test three service payment modalities: pay at the end of the contract duration, at the end of each mainchain round, and pay at the end of each epoch (note the last two are equivalent in the \sysname as summaries are produced at the end of an epoch). So in the case without \sysname, and although there are no epochs, we enforce them by paying every $k$ mainchain rounds (lower frequency than every round and higher than at the end of a contract as servers would like to receive their partial payments earlier than at the end of, especially, long contracts). We run this experiment with 8000 miners (or servers) with each miner handling one contract.

Our results (Table~\ref{tbl:modality}) show that \sysname improves the performance of the base file storage market. It achieves better throughput, with similar confirmation time if we pay for service at the end of each epoch, rather than the end of the contract. Thus, even if the base system aggregates payments as part of its network protocol, \sysname can further improve this performance and allow for flexible setup as this frequency can be changed over time by merely changing the summary rules rather than the market network protocol. Hence, \sysname supports (flexible) micropayment aggregation by design.

\begin{table}[t!]
\caption{Throughput and confirmation time for various server payment models with/without \sysname.}
\resizebox{\linewidth}{!}{
\begin{tabular}{ | c |c | c | c | c | c |}
\cline{2-5}
  \multicolumn{1}{c|}{} & Payment Modality Setup  &  Throughput & MC Latency & SC Latency\\ 
 \hline
 \multirow{3}{*}{\rotatebox[origin=c]{90}{without}}& contract end & 1978.70 & 22.23 & N/A\\  
 \cline{2-5}
 & epoch end  &  2001.63 &  22.56 & N/A \\ 
 \cline {2-5}
 &  each mc-round  & 2202.20 &24.65 & N/A \\
 \hline
 \hline
 \multirow{2}{*}{\rotatebox[origin=c]{90}{with}} & contract end & 6583.20 & 0 & 8.39 \\  
  \cline {2-5}
 & epoch end  &  6964.49 & 0.84 & 8.23  \\ 
 \hline
\end{tabular}
}
\label{tbl:modality}

\end{table}

\vspace{3pt}
\noindent\textbf{Optimistic rollups comparison.} We compare \sysname to a resource market that uses an Optimism-inspired rollup solution (\opBoost). We configure the rollup to process 1.5 MB of transactions (an Optimism batch is 1.8 MB~\cite{optimism-size}, we use 1.5 MB for a fair comparison with \sysname), and to take 3 mc-rounds for processing (Optimism batch processing takes between 2 and 4 Ethereum rounds to be finalized~\cite{jumpcryptoBridgingFinality}; we use an average of 3 Ethereum rounds). \sysname is configured with 3 sc-rounds per mc-round, and each sidechain block is 0.5 MB. We report throughput, latency---the time needed for a transaction to appear in a sidechain block in \sysname, or a non-finalized processed rollup in \opBoost, and finality---the time needed for a rollup or sidechain summary to be considered final.

\begin{table}[t!]
    \caption{Comparison of \sysname vs. \opBoost}.
    \centering
    \small
    \begin{tabularx}{\columnwidth}{|>{\centering}m{.185\linewidth}|
    *{4}{Y|}}
    \cline{2-4}
        \multicolumn{1}{c|}{}  &  Throughput \footnotesize{(tx/mc-round)}  & Latency (mc-round) & Finality (mc-round)  \\
    \hline
        \sysname & 4730 & 132.59 & 132.59 \\
    \hline
        \opBoost  &  4682 & 129.81 & 50529 \\
    \hline
    \end{tabularx}
    \label{tab:op}
\vspace{-6pt}
\end{table}

Our results (Table~\ref{tab:op}) show that, given the equivalent configurations, throughput and latency are similar. However, \sysname provides significantly shorter finality than \opBoost since a transaction is final once it appears in a meta-block. \opBoost has a one-week contestation period, which is equivalent to 50400 mc-round in the experiment setup; users has to wait for that long before taking any action based on the rollup-induced state changes. Furthermore, \opBoost has the risk of adopting invalid results due to incentive incompatibility of verifiers who may not contest the rollup results as discussed earlier.

\section{Related Work}
\label{sec:related-work}
\vspace{-4pt}
In this section, we review prior work on sidechains and relevant blockchain scalability solutions.\medskip

\textbf{Sidechain-based solutions.} As mentioned before, the term sidechain was first introduced by~\cite{Back14}; they outline the main use cases of sidechains and present a (high-level) protocol for two-way peg. Since then, a long line of work around sidechains has emerged.

Among these works,~\cite{Back14,Gavzi19,Kiayias19,Connor17,Garoffolo18,Garoffolo20} focus on two-way peg protocols, where~\cite{Kiayias19} allow the sidechain to monitor events on the mainchain (a restricted form of data exchange).~\cite{Croman16} briefly note the challenges of employing a sidechain, and~\cite{Robinson2022atomic,Robinson20merits} tackle permissioned sidechains for Ethereum. \cite{Lee19} utilize sidechains to build a mutable blockchain, where a sidechain has the solo purpose of mining a modified block to replace an older one. The works~\cite{lee2021hierarchical,Rovzman21} use sidechains to improve scalability:~\cite{lee2021hierarchical} combine sharding and sidechains to allow parallel processing of smart contracts, and~\cite{Rovzman21} use a tree of sidechains to scale a distributed manufacturing service. Besides requiring smart contracts in both schemes, \cite{lee2021hierarchical} rely on the users to distribute the load among the shards, and \cite{Rovzman21} improve scalability by compromising security (by operating the sidechain at lower mining difficulty than the mainchain). Other works, mainly~\cite{Yin2021sidechains, Gavzi19}, use a committee selected at random from a pool of sidechain miners to advance the sidechain and synchronize its state in the mainchain, but they both are limited to asset exchanging protocols and use independent chains.~\cite{de2022hierarchical} also employ independent sidechains in synchronizing the asset transfer.

In parallel, many industrial initiatives explored sidechains~\cite{Poon17,cosmos,Wood16,btcrelay,liquid,rootstock,xdai}. However, these suffer from several limitations: they work only with specific blockchain types, mainly Bitcoin and Ethereum; focus only on two-way peg; rely on optimistic-rollups; target permissioned blockchains or federated ones; or have incomplete specifications, and the majority requires smart contracts. Also, all previous solutions, except for~\cite{Gavzi19, Kiayias19, Yin2021sidechains}, do not provide rigorous security analysis. Even those who included formal security notions, they were limited to two-way pegs and independent sidechains. 

None of these (independent) sidechain solutions can be trivially extended to permit arbitrary data exchange, workload sharing, and chain pruning. Independent chains have their own ecosystems in terms of miners/users, network protocol, transactions, and tokens, which makes it hard (if not unfeasible) to split the workload between them while keeping the mainchain as the single source of system state; clearly having only a two-way peg operation does not support that. At the same time, chain dependency means that their valid and secure operations are tied to each other; thus, additional countermeasures are needed to ensure that both chains satisfy safety and liveness, and that they can recover from interruptions on any of these chains.

\textbf{Relevant blockchain scalability solutions.} Existing solutions can be classified into three categories \cite{Hafid2020scaling}: layer-one or on-chain solutions, layer-two or off-chain solutions, and sidechain solutions (reviewed earlier). On-chain solutions improve performance by changing the blockchain parameters (e.g. block size and block time) or modifying its design (e.g. changing the consensus mechanism or blockchain structure) such as sharding~\cite{Kokoris18}. Changing the blockchain parameters is problematic. For example, increasing the block size may increase throughput, but it increases verification and propagation delays (and still all transactions are logged on-chain), leading to higher network partitioning probability---a solution that is not favorable by the community and leads to hard forks (e.g., the case of BSV~\cite{bsv}). Sharding divides the miners and the workload among shards to allow parallel processing, and they usually focus on how to handle cross-shard transactions and load balancing among the shards. None of these schemes devised a way to prune the mainchain

Off-chain solutions include two classes: state channels and rollups. In both cases transactions are processed off-chain and only state changes are recorded on-chain. Payment channel networks~\cite{Decker15,Green17,Miller19} are an example of the state channel approach, but they target only payment aggregation. ZK rollups~\cite{starkware,Garoffolo20,Bowe20,bonneau2020coda} extend this model by performing computations off-chain with ZK proofs submitted on-chain to prove output validity. ZK proofs are costly and may require a trusted setup. Optimistic rollups~\cite{Poon17,matic} assume all submitted results are valid unless proven otherwise. A set of computing parties perform the computation, one of them submit the results (or state changes), while the rest of the parties (aka verifiers) should dispute incorrect results using an interactive challenge-response protocol. Despite its efficiency, this paradigm requires long dispute periods, slowing down system operation and service delivery. Furthermore, it may involve centralized trusted verifiers~\cite{op-verifier, arbitrumStateArbitrums}, while incentivizing non-trusted verifiers to perform their job is still an open question~\cite{li2023security, mediumCheaterChecking, mediumOptimisticRollup} which may lead to adopting incorrect ledger state changes.

\sysname aims to exploit the full potential of sidechains. It considers dependent sidechains and allows for arbitrary data exchange with the mainchain, rather than just two-way peg. It achieves validity of state changes resulting from the sidechain using fast consensus. Thus, it voids the use of expensive ZK proofs or slow and interactive contestation process. \sysname offers a unified interface and architecture that can be used with any resource market regardless of the service it provides. Also, it is agnostic for the consensus and mining type used by the underlying resource market mainchain. Lastly, \sysname is the first system to show how sidechains can be used for blockchain pruning, thanks to its two-type block architecture.

\section{Conclusion}
\label{conclusion}
\vspace{-4pt}
In this paper, we presented \sysname, a secure performance booster for blockchain-based resource markets. It introduces a new sidechain architecture that has a mutual-dependence relation with the mainchain. \sysname allows these chains to exchange arbitrary data---all heavy/frequent service-related transactions are offloaded to the sidechain, while logging only concise summaries on the mainchain. Furthermore, \sysname supports blockchain pruning and automatic recovery from interruptions. We analyze the security of our system and conduct thorough experiments to evaluate its performance. The results show the great potential of \sysname in improving performance of resource markets in practice.

\section*{Acknowledgements}
This material is based upon work supported by the National Science Foundation under Grant No. CNS-2226932. 

\bibliographystyle{plain}
\bibliography{chainBib}

\appendices
\section{Autorecovery Protocol: A Detailed Description}
\label{appendix:autorecovery}
In this section, we elaborate on the design of \sysname's autorecovery protocol. We define all sidechain interruption cases (using the ABC threat modeling framework~\cite{Almashaqbeh19}), then we present our protocol that addresses them.

\subsection{Interruption Cases}
We begin with modeling the tolerance of PBFT-based consensus. Then, we list the possible threats categorized by their underlying causes. Finally, we remove threats that are neutralized by system assumptions and elaborate on those that require detection and recovery.

\textbf{PBFT consensus tolerance.}
In PBFT~\cite{castro1999practical}, up to $f$ faulty nodes can be tolerated (i.e., while preserving the safety and liveness properties) with a committee size, denoted as $\cs$, of $3f+1$. Thus, reaching an agreement requires at least $2f+1$ supporting votes (to guarantee safety---agreeing on valid records---these votes are from honest nodes). On the other hand, if $f+1$ votes are absent, liveness will be violated as no agreement can be reached. The original PBFT~\cite{castro1999practical} is not scalable for large consensus groups, and so follow-up works build their consensus protocols on modified versions. ByzCoin~\cite{Kogias16}, for example, is designed to scale better and provides a near-optimal tolerance with $\cs = 3f+2$; so it requires at least $2f+2$ supporting votes to reach agreement. We adopt this version since we deal with large-scale distributed systems. Also, we adopt the leader-based approach, so a leader proposes a block for the committee to agree on, and once an agreement is reached, the block is appended to the sidechain.

\textbf{Potential threat cases.}
Interruptions on the sidechain are related to absence of meta or summary blocks (due to not initiating the agreement or not reaching an agreement), absence of sync-transactions, or publishing invalid blocks/sync-transactions. These are threats to the liveness and safety of the sidechain, and consequently the mainchain since service-related traffic is not being processed properly. These could be due to malicious or lazy leader or committee members---those who may go unresponsive, behave arbitrarily, or simply passively collaborate with the adversary by accepting records without validation.\footnote{Note that being unresponsive due to bad network conditions is ruled out in our system since we assume bounded-delay message delivery.}

We denote the committee elected in an epoch to run the sidechain as primary committee $\com_0$ of size $\cs$. We also denote the number of malicious members in a committee as $m$ and the number of lazy nodes as $l$. Accordingly, we have the following threat cases:
\begin{itemize}
    \item Lazy/malicious committee: if the majority of the committee members are lazy or malicious, the following threats may take place: becoming an unresponsive committee, publishing invalid blocks or invalid sync-transactions, or pretending that the leader is unresponsive to force a leader change.
    \item Lazy/malicious leader: who proposes invalid (meta or summary) blocks/sync-transactions or is unresponsive (i.e., does not initiate agreement on block proposals or does not issue a sync-transaction).
\end{itemize}

The unresponsive lazy committee threat happens when lazy and malicious members are unresponsive, so that when $m+l \geq f+1$. This prevents committee agreement, and hence, violates liveness. The threat of publishing invalid blocks happens when lazy/malicious members agree on an invalid block proposal. That is, the leader is malicious and proposes an invalid (meta or summary) block; malicious (perhaps colluding) committee members would agree, and lazy honest members (who do not validate the block) would also agree. This threat happens when $m+l \geq 2f+2$, thus violating safety. Lastly, a malicious leader may attempt to issue an invalid sync-transaction---one that is not consistent with its corresponding summary-block. Note that for this threat case, the invalid sync-transaction will be rejected by mainchain miners since only valid transactions are accepted. Thus, this threat is detected by the absence of $\tx_{\sync}$ from the mainchain ledger state $\led_{\mainc}$ (as shown in Figure~\ref{fig:autorecovery-pro}).

For changing the leader, as we describe later, \sysname uses the view change mechanism of PBFT~\cite{castro1999practical}. That is, committee members agree that the leader is unresponsive/malicious and force a leader change. This agreement also requires $2f+2$ votes to take place. Thus, framing a leader may happen when $m+l \geq 2f+2$, which is considered a safety violation (i.e., merged with the safety case above). At the same time, not reaching an agreement on changing the leader puts the system on hold. This takes place when there are unresponsive $m + l \geq f+1$ committee members, which is a liveness violation (covered by the liveness case above).

\subsection{Protocol Design}
We introduce an autorecovery protocol to address the previous threat cases, guaranteeing that the sidechain can recover from any interruptions, thus preserving the security of the mainchain and the market as a whole. At a high level, our protocol is composed of two techniques to handle malicious/unresponsive leader or committee. The former is detected by the primary committee members themselves, and dealt with by initiating a \emph{leader change}. While for the latter, we introduce the concept of \emph{backup committees} who will monitor the primary committee and take over if a misbehavior is detected.

\textbf{(1) Leader change.}
This mechanism is composed of two phases: detection of a malicious/unresponsive leader, and recovery from this situation. Leader change is parameterized by a timeout $\zeta$ and it is repeated for every round in every epoch. 

\emph{Detection.} In the primary committee, each member monitors the leader's behavior and can independently detect the following:
\begin{itemize}
    \item If no activity is detected for a $\zeta$ duration, then the leader is identified as unresponsive.

    \item If an invalid block proposal is received, or an invalid sync-transaction is broadcast, the leader is identified as malicious.
\end{itemize}

For the latter, recall that all committee members process the sidechain traffic, and hence, can verify the validity of any proposed meta-block. Also, all of them have copies of the meta-blocks published so far in an epoch, so they can validate a proposed summary-block. Similarly, they can validate the sync-transaction since they have its corresponding summary-block (same applied for mainchain miners). Recall also that an invalid sync-transaction will be rejected by mainchain miners.

\emph{Recovery.} Upon the detection of either of the above cases, each (honest) committee member sends a \textit{leader-change} message to the rest of the committee. Once the next candidate leader receives $2f+1$ leader-change messages (including its own), it takes over as the new leader in that epoch. The leader-change messages, which in case of invalid behavior will be accompanied by a copy of the invalid block proposal/sync-transaction, constitute the \textit{leader-change certificate} that is published in the next meta-block (and also the summary-block of the epoch) as a proof on the validity of the leadership change.

Note that, if $f + 1 \leq m+l \leq 2f+1$, an agreement on changing a leader will not be reached. This is reduced to the threat of unresponsive-committee (handled by the backup committee technique discussed next). Furthermore, it could be the case that a large number of future leader candidates are malicious or lazy. This will result in a long process of leader change that will put the system on hold (violates liveness). This situation is also reduced to the unresponsive committee case. Furthermore, absence of a valid sync-transaction from the mainchain, due to continuous leader-change, will be reduced to the unresponsive committee case and addressed based on whether there is a valid summary-block in the epoch.

\textbf{(2) Backup committees.}
In this technique, besides the primary committee handling the workload of an epoch, there will be a set of backup committees standing by. A backup committee has the same size as the primary committee and monitors the sidechain traffic. The primary committee and all backup committees (we denote their number by $\kappa$) are composed of disjoint set of miners. Thus, for each epoch the election mechanism produces $\kappa + 1$ committees $\{\com_i\}_{i = 0}^{\kappa+1}$ with a publicly known order---$\com_0$ is the primary committee, $\com_1$ is the first backup committee, and so on. Note that when a backup committee takes over, it becomes the primary committee, and the next committee in line becomes the first backup committee, and so on. This mechanism is also composed of two phases: detection and recovery, as we explain below.

\emph{Detection.} A backup committee is responsible for detecting an unresponsive primary committee or one that publishes invalid blocks. Detecting the former is based on a timeout $\eta$ such that $\eta > \zeta$; that is, no blocks or sync-transactions are published during $\eta$ period. Detecting the latter is done when verifying a newly published block. The backup committee monitors the sidechain traffic and published blocks, so it can identify if a block is invalid.

\emph{Recovery.} For the case of lack of progress, the leader of the backup committee initiates an agreement on a \emph{unresponsive-committee} message. While if invalid blocks are detected, it issues a \emph{misbehaving-committee} message containing a reference to the invalid block(s). Once the agreement is complete, by having at least $2f+2$ signatures or votes over any of these messages, the backup committee will take over. All relevant information about the committee takeover will be published in the next meta-block (which will be created by this new committee), and eventually it will be published in the summary-block of that epoch. All invalid blocks are dropped: if it is a summary-block (and so all meta-blocks are valid), the new primary committee only publishes a summary-block~\footnote{In this case, there may be a race condition between the the transaction published by the faulty committee and the one from the backup committee. This can be remediated by delaying the processing of sync transactions on the mainchain by $\Delta$ (the delay bound on messages, as specified in Section~\ref{sec:prelim}), to check for the eventuality of race conditions. The sync transaction from the committee who took over will attest to the takeover---i.e., it will report the incident and show the agreement on its occurrence, and hence, will automatically invalidate the other sync-transaction}, otherwise, the work done by the faulty committee is considered invalid and the backup committee starts all over. In case of the absence of a sync-transaction at the end of an epoch, but there is a valid summary-block, no takeover will take place. Instead, the new epoch committee will issue a mass-sync-transaction covering both epochs.

Since a backup committee itself could be unresponsive or malicious, we require having $\kappa$ backup committees. Thus, the protocol above is repeated but with a larger timeout. That is, backup committee $\com_{i}$ will monitor for $i \cdot \eta$ period (as shown in Figure~\ref{fig:autorecovery-pro}) to account for the timeout needed for the primary committee and all backup committees ahead of $\com_i$.

The success of detection and recovery is conditioned on having the backup committee reach agreement on the misbehavior, and having this committee work properly when taking over. Thus, the success probability of the backup committee mechanism determines the success probability of our autorecovery protocol (recall that the unsuccessful leader-change threat is reduced to the unresponsive committee threat). In Appendix~\ref{appendix:autorecovery-sec-analysis}, we analyze the success probability of the autorecovery protocol, which allows us to set up the number of backup committees $\kappa$ needed to make this probability overwhelming, thus preserving liveness and safety of the sidechain, and consequently the mainchain.

\section{Limitations, Tradeoffs, and Future Work}
\label{appx:limit}
\textbf{Applicability.} In designing \sysname, we target a particular system category; blockchain-based resource markets. The level of performance gains depends on the underlying system design and functionality. \sysname is more suitable for systems that exhibit high volumes of frequent service-related traffic that can be summarized, and thus can be off-loaded to the sidechain. Systems that do not exhibit this behavior will not benefit much from the use of our scheme (as we show empirically in Section~\ref{sec:perf-eval}). Thus, outside the blockchain-based resource markets, the main motivation for this work, a viability study is needed in order to decide whether \sysname is applicable/beneficial. As part of our future work, we aim to identify other categories of blockchain-based systems where we can employ \sysname. In particular, we will investigate decentralized finance (DeFi) systems, e.g., decentralized exchanges and money lending applications, and tokens on top of Ethereum.

Furthermore, our system model targets permissionless blockchains. This begs the question of whether \sysname can be used in the context of permissioned blockchains. For example, a hybrid digital service has its own centrally-managed infrastructure, e.g., Akamai~\cite{akamai} for content delivery network (CDNs), and may rely on a peer-assisted network to extend its coverage   
and help in handling peak demands. Here, it makes more sense to have a permissioned blockchain to manage the peer-assisted network by creating a resource market for the service provided by end users (or peers). \sysname can be applied here, and in fact, at a lower overhead than what is incurred in a permissionless setting. That is, our autorecovery protocol is not needed; the miners managing the main and side chains are composed of a small set of reliable and fully-identified machines. This also means that committee election is not needed where the PBFT consensus is run by this set of registered miners. 

\textbf{The autorecovery protocol.} Based on the analysis in Appendix~\ref{appendix:autorecovery-sec-analysis}, there is a tradeoff between the committee size and the number of backup committees required to have a negligible failure probability of autorecovery. Naturally, larger committee size will lead to higher probability of having (active) honest majority, which leads to a smaller $\kappa$ (i.e., smaller number of backup committees, and thus shorter worst case autorecovery time). However, a larger committee size will introduce additional overhead in terms of increased agreement time. Thus, system designers should select a tardeoff based on the round length in the targeted resource market and any performance constraints it has with respect to service delivery.

\textbf{Miner incentives.} Incentive compatibility is an important parameter when studying blockchain security. An issue that arises in our setup is incentivizing miners to work faithfully on both the main and side chains. This is outside the scope of this work; we focus on the system design and its robustness/security aspects. As part of our future work, we plan to investigate the following direction: tying the incentives of maintaining both chains together. In detail, a miner (who is a member of the sidechain committee) will collect the full rewards of the recent blocks it has mined on the mainchain only if this miner is active on the sidechain. If it does not participate in the voting process on meta- or summary-blocks, or does not propose a block when selected as a leader, the mining rewards will be deducted based on the number and weight of missing votes (i.e., a vote on a summary-block may have a larger weight than one on a meta-block, also a leader's punishment will be larger than a member's punishment).

\textbf{Storage pricing and transaction fees.} Another issue that arises under the setup of \sysname is configuring the transaction fees. Note that not all transactions occupy permanent space on the mainchain, and even not all of them occupy a permanent space on the sidechain but instead their summaries do. Thus, sidechain transactions---these that are discarded---may have lower fees than mainchain ones, or these that are recorded in full in a summary-block, as they do not occupy a permanent storage space. At the same time, the fees of sidechain transactions can be used to solve the issue of miner incentives above; that is, sidechain traffic may have larger fees than these of mainchain traffic to compensate the miners for the work on the sidechain. Studying this tradeoff for resource pricing and transaction fees falls under the recent notion of multidimensional fees~\cite{diamandis2023designing}, and we leave it as part of our future work.

\section{Proof of Theorem 1}
\label{appx:chainboost-security}
In order to prove Theorem 1, we prove the following lemmas showing that \sysname is secure and so it preserves safety and liveness of the underlying resource market.

\begin{lemma}[Preserving safety]
\sysname preserves the safety property of the underlying resource market.  
\end{lemma}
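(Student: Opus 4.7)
The plan is to decompose the safety claim into three interlocking sub-claims, each of which reduces to a property already established earlier in the paper. Starting from the assumption that the underlying resource market ledger protocol is safe in the sense of Section~\ref{sec:prelim}, I would (i) show that the mainchain with \sysname enabled remains safe, (ii) show that the sidechain (viewed as a ledger of summary-blocks, with meta-blocks playing an auxiliary role) is safe, and (iii) argue that the joint confirmed state seen by any two honest parties at any two times satisfies the prefix-consistency requirement. The ``confirmed state'' for a party is taken to be the mainchain state buried under $k$ blocks together with the summary-blocks whose corresponding $\tx_{\sync}$ is in that buried prefix.

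For (i), \sysname augments the mainchain only with a new transaction type $\tx_{\sync}$ and a fixed block of summary state variables $\mathsf{summary_{variables}}$. Since $\verifySyncTransaction$ is a deterministic predicate over the current sidechain view and $\updateState$ on the mainchain is the one inherited from the underlying market, the mainchain consensus—whatever it is—continues to produce a safe chain by assumption, so safety of the augmented mainchain reduces to well-definedness of $\verifySyncTransaction$. For (ii), the sidechain is run by a PBFT committee with threshold $f$; under the honest-majority guarantee proved for the committee election in Appendix~\ref{appendix:autorecovery-sec-analysis}, PBFT certifies at most one meta- or summary-block per round with $2f{+}2$ signatures, giving standard PBFT safety for the sidechain's append-only log of summary-blocks. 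The non-trivial sub-step here is showing that $\prune$ does not break safety: by design a meta-block is discarded only after its summary is logged in a mainchain block buried under $k$ confirmed blocks, so no honest party ever prunes data that is not permanently represented by a surviving summary-block, and the summary-block's $\verifyBlock$ predicate lets any verifier re-check relevant state.

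The main obstacle lies in (iii): the mutual dependence between the two chains means a rollback on the mainchain could in principle retract a $\tx_{\sync}$ that an honest $P_1$ already treated as final at time $t_1$, causing divergence from what $P_2$ sees at $t_2$. To close this, my plan is to invoke three facts in combination: (a) committee election reads only the confirmed (buried) portion of $\led_{\mainc}$, so rollbacks cannot retroactively replace the committee whose summary-block is in question; (b) the mass-syncing mechanism guarantees that any $\tx_{\sync}$ lost to a rollback is re-issued by the next epoch's committee covering both the current and affected past epochs, ensuring that every summary-block whose predecessors were ever committed is eventually reflected in the confirmed mainchain; and (c) the autorecovery protocol, with its backup-committee chain of length $\kappa$, fires before any invalid summary can be buried under $k$ blocks, again by the negligible-failure bound from Appendix~\ref{appendix:autorecovery-sec-analysis}. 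Chaining these with the mainchain's own safety then yields the prefix-consistency required.

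Putting the three parts together, safety of the joint confirmed state follows by a union bound over the failure events of the mainchain consensus, the committee honest-majority assumption, and the autorecovery protocol, all of which are negligible in $\lambda$. The technically delicate point, and the one I expect to spend most care on, is calibrating the depth parameter $k$ used by $\prune$ against the mainchain's own safety depth: $k$ must strictly exceed the latter so that any pruning decision is consistent with the mainchain's safety bound and therefore cannot be undone by a subsequent rollback. The rest of the argument is essentially bookkeeping on top of the PBFT and consensus guarantees already assumed or established.
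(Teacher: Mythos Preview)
Your proposal is correct in spirit and relies on the same underlying guarantees the paper invokes (PBFT safety, the committee-size bound from Appendix~\ref{appendix:autorecovery-sec-analysis}, mass-syncing, and the autorecovery protocol), but the decomposition is genuinely different from the paper's. The paper's proof is \emph{threat-enumeration based}: it lists four concrete attack scenarios that the new architecture could enable---service slacking/theft via pruned proofs, an out-of-sync mainchain, invalid $\tx_{\sync}$, and invalid sidechain blocks---and then, for each, names the specific \sysname mechanism that neutralizes it, concluding that since no new safety-violating threat survives and mainchain-only transactions are untouched, safety is preserved. Your route is \emph{property-based}: you argue prefix-consistency directly by separately establishing safety of the augmented mainchain, safety of the summary-block log, and consistency of the joint confirmed state across rollbacks, then close with a union bound. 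What your approach buys is a tighter connection to the formal safety definition in Section~\ref{sec:prelim} and an explicit treatment of the rollback/pruning-depth interaction (your point about calibrating $k$), which the paper handles only implicitly. What the paper's approach buys is a direct mapping from mechanisms to the attacks they defeat, which is closer to how a systems reader would audit the design; it also surfaces the service-slacking/theft angle---that pruning must not let a party falsely deny or claim service---which your structural argument covers only indirectly through the ``valid summaries suffice'' step. Neither approach has a gap; they are complementary presentations of the same reduction.
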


\begin{proof}
    Due to meta-block pruning and the mainchain syncing process that \sysname introduces, the following threats may arise in the resource market:
    \begin{itemize}
        \item Service slacking and theft: A server, who did not provide a service, claims it did so but the proofs were removed during sidechain pruning. Also, a customer, who received the service, claims it did not receive anything as no proofs exists (since they were pruned) and avoids paying the server.
    
        \item Out-of-sync mainchain: A committee leader does not issue a $\tx_{\sync}$ at the end of the epoch, causing the sidechain and mainchain to be out of sync.
    
        \item Invalid syncing: A committee leader issues an invalid $\tx_{\sync}$, thus causing the mainchain to be updated with invalid state changes. 
    
        \item Violating sidechain quality: A sidechain committee mines invalid meta- or summary-blocks, resulting in invalid mainchain state updates during syncing. 
    \end{itemize}

    Note that \sysname does not impact currency transfers or any other transactions that are processed on the mainchain. These are secure by the security of the underlying resource market and its mainchain. 
    
    We now show how \sysname addresses these threats. This mainly relies on the use of a secure PBFT with a secure/unbiased committee election mechanism, and the security of the autorecovery protocol of our scheme. Furthermore, in Appendix~\ref{appendix:autorecovery-sec-analysis}, we derive a lower bound for the committee size $\cs$ to guarantee honest majority.

    \emph{Service slacking and theft}.  These threats are mitigated by \sysname's design. First, recall that the sidechain processes the service-related traffic using the same specifications/logic of the underlying resource market. And second, meta-blocks are not dropped until after their summary-block is mined and its corresponding $\tx_{\sync}$ is confirmed on the mainchain, and that summary-blocks are stored permanently on the sidechain. Therefore, sidechain miners can verify that a $\tx_{\sync}$ is consistent with its corresponding summary-block. Given that the sidechain uses a secure PBFT consensus, the committee size guarantees honest majority, and the autorecovery protocol addresses any interruptions on the sidechain, published summary-blocks are valid based on their corresponding meta-blocks. Thus, only valid summaries, that a committee with honest majority has produced through consensus, will be used to sync the mainchain. This is enough to verify any service claims despite old meta-block pruning. 

    \emph{Out-of-sync mainchain}. These threats are mitigated using \sysname's autorecovery protocol and mass-syncing. When the sidechain committee detects that no $\tx_{\sync}$ is issued, they initiate a leader-change and the new leader issues the $\tx_{\sync}$. Also, if this primary committee does not do that, it is considered unresponsive and the backup committee technique will be activated. Mass-syncing handles cases when there are rollbacks on the mainchain, allowing the mainchain to capture all summarized traffic that was missed during the rollback period.  

    \emph{Invalid $\tx_{\sync}$ and violating sidechain quality}. Invalid $\tx_{\sync}$ is detected immediately when mainchain miners verify its consistency with the corresponding summary-block, and so it will be rejected. The autorecovery protocol will deal with the leader who issues such an invalid transaction, as well as a leader that proposes an invalid summary- or meta-block, or a committee that agrees on such invalid proposals as explained in Section~\ref{sec:robustness} and Appendix~\ref{appendix:autorecovery}.

    As a result, \sysname satisfies safety for the sidechain, and hence, does not result in any security threats to the safety of the mainchain and the resource market as a whole.
\end{proof}

\begin{lemma}[Preserving liveness]
\sysname preserves the liveness property of the underlying resource market.  
\end{lemma}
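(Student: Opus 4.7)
The plan is to show that every valid transaction broadcast at time $t$ will be recorded in the confirmed state of the appropriate ledger within some polynomial delay $u'$, so that liveness of the underlying resource market (with parameter $u$) lifts to liveness of the composite system (with possibly larger, but still finite, parameter $u'$).

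First, I would partition the argument by the transaction's home chain. For mainchain-annotated transactions ($\mathsf{Annot}(\tx) = 00$), honest miners add $\tx$ to $\mathsf{pendingTx}_{\mainc}$ exactly as in Figure~\ref{fig:sysoperation}, and since \sysname does not alter the mainchain consensus or the set of miners that maintain it, liveness follows directly from the assumed liveness of the resource market. The only new mainchain traffic introduced is $\tx_{\sync}$, which occurs at most once per epoch and is bounded in size, contributing only an $O(\omega)$ additive overhead to $u$.

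For sidechain-annotated transactions, the argument decomposes into successive subcases of increasing severity. (i) Under an honest, responsive primary leader, $\tx$ appears in the next proposed meta-block, and PBFT agreement terminates in bounded time by the committee honest-majority bound derived in Appendix~\ref{appendix:autorecovery-sec-analysis}. (ii) If the leader is malicious or unresponsive, the leader-change mechanism triggers within timeout $\zeta$ and cycles to the next candidate; since an honest candidate exists with overwhelming probability under the committee-size lower bound, a valid meta-block is produced within at most $O(\cs)\cdot\zeta$ time. (iii) If the entire primary committee fails (either $m+l \geq f+1$ for liveness or $m+l \geq 2f+2$ for safety), backup committee $\com_1$ takes over within timeout $\eta$; should it too fail, each subsequent $\com_i$ activates within $i\cdot\eta$. (iv) Finally, the summary-block produced at epoch's end yields a $\tx_{\sync}$ that, by the mainchain argument above, is confirmed within $u$.

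The main obstacle will be step (iii): bounding both the worst-case delay and the failure probability of the cascading backup mechanism. I would appeal directly to the probabilistic analysis in Appendix~\ref{appendix:autorecovery-sec-analysis}, which pins down choices of $\cs$ and $\kappa$ for which the probability that all $\kappa{+}1$ committees simultaneously fail is negligible in $\lambda$. Conditioned on at least one functional committee in the chain, the worst-case delay is bounded by $(\kappa{+}1)\eta$ for committee takeover, one sidechain round for block publication, one epoch $\omega$ for summarization, and $u$ for mainchain confirmation of $\tx_{\sync}$. Setting $u' = u + \omega + (\kappa{+}1)\eta + O(1)$ then yields a polynomial liveness parameter, and combining this with the negligible failure probability from the autorecovery analysis establishes that \sysname satisfies liveness with parameter $u'$, thereby preserving liveness of the underlying resource market.
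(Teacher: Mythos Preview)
Your proposal is correct in substance but takes a genuinely different organizational route from the paper. The paper's proof is threat-model-based: it enumerates three liveness-related threats (targeted DoS/censorship of specific transactions, violating sidechain liveness via unresponsive committee or missing $\tx_{\sync}$, and violating public verifiability via the summary/syncing/pruning pipeline) and then argues, at a qualitative level, that each is mitigated---DoS by committee rotation, liveness violations by the leader-change and backup-committee mechanisms, and public verifiability by the permanence of summary-blocks together with mass-syncing for rollbacks. Your proposal instead partitions by transaction home chain and then case-splits on adversarial severity to derive an explicit liveness parameter $u'$, which more directly tracks the formal liveness definition in Section~\ref{sec:prelim}.

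What each buys: your parameter-derivation approach is more quantitative and makes the dependence on $\omega$, $\kappa$, and $\eta$ explicit, which is useful if one later wants concrete bounds. The paper's threat-enumeration approach, on the other hand, surfaces two considerations you leave implicit: (a) targeted censorship of a \emph{particular} transaction by an otherwise-functional committee, which the paper dispatches via rotation rather than via honest-leader inclusion; and (b) public verifiability/chain quality, which the paper folds into liveness following~\cite{pass2017analysis}. Your argument covers (a) implicitly through the honest-leader case but never names it; (b) you do not address at all. Also note that your worst-case delay $(\kappa{+}1)\eta$ is optimistic relative to the paper's own recovery-time analysis in Appendix~\ref{subapdx:Recovery-Time}, which bounds the delay by $(\kappa{+}1)\Delta_{\mathrm{epoch}}$ rather than $(\kappa{+}1)\eta$; this does not break your argument (the bound is still polynomial and finite), but the constant in your $u'$ should be adjusted accordingly.
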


\begin{proof}
    Having a committee managing the sidechain and responsible for processing and summarizing service-related traffic may impose threats to the liveness and public verifiability of the resource market as follows:
    \begin{itemize}
        \item DoS attacks: The sidechain committee targets particular clients or servers and excludes their transactions from being published.
        
        \item Violating sidechain liveness: A misbehaving sidechain committee does not add new blocks to the sidechain, or misbehaving leader does not issue a $\tx_{\sync}$. This will disrupt the entire resource market as service-related transactions are not (fully) processed. 
        
        \item Violating sidechain/mainchain public verifiability: This threat encompasses any impacts of the summary/syncing/pruning processes on the public verifiability of the side and the main chains. 
    \end{itemize}

    Again, since \sysname uses a secure PBFT consensus to run the sidechain with a rotating committee, and a secure autorecovery protocol, it addresses the threats above as follows.
    
    \emph{DoS:} This is alleviated by the security of the PBFT consensus and its rotating committee. As the sidechain committee changes every epoch, maintaining a position to perpetually deny particular transactions is unfeasible.

    \emph{Violating sidechain liveness}: These threats are mitigated using \sysname's autorecovery protocol. At the round level, when the committee does not hear anything from the leader, they issue a leader change. At the epoch level, when a backup committee does not see growth on the sidechain after a timeout, it engages in a committee-change and takes over as described in Section~\ref{sec:robustness} and Appendix~\ref{appendix:autorecovery}.

    \emph{Violating sidechain/mainchain public verifiability}: \sysname uses a PBFT-based consensus for the sidechain, and relies on autorecovery to ensure the correct operation of this sidechain. Also, any rollbacks on the mainchain will be handled using the mass-syncing mechanism, so no summaries will be lost. As summary-blocks are accurate snapshots of the state of the sidechain for their epochs, by the security of the PBFT consensus, and that meta-blocks are kept until the $\tx_{\sync}$ is confirmed (and even for the rollback period), public verifiability of both the sidechain and mainchain state is preserved.

    Accordingly, \sysname satisfies liveness of its sidechain, and hence, preserves liveness of the mainchain and the resource market.
\end{proof}

\noindent\textbf{Security of the autorecovery protocol.} 
For the threats solved through autorecovery, the mitigation is not unconditional; autorecovery fails if all backup committees selected for an epoch fail, and a backup committee fails if it has over $\lthresh$ deviant members (the liveness threshold $\lthresh$ is the threshold for absent votes that will violate liveness as defined in Appendix~\ref{appendix:autorecovery-sec-analysis}).
Through a probabilistic analysis of the autorecovery failure probability, we derive a formula for failure probability as a function of the committee size $\cs$ and the number of backup committees $\kappa$. An appropriate configuration keeps the failure probability negligible.
\begin{lemma}\label{theorem-sec}
The probability of the event that \sysname's autorecovery fails (denoted as $AF$) can be expressed as:
\begin{align*}
\Pr[AF] = \sum_{i=(\kappa+1) \lthresh}^{(\kappa+1) \cs} {\frac{\binom{\mathscr{M}}{i}\binom{N-\mathscr{M}}{(\kappa+1) \cs - i}}{\binom{N}{(\kappa+1) \cs}}}
    {\left(\frac{[y^{i}]\Psi(y)}{\binom{(\kappa+1) \cs}{i}}\right)}
\end{align*}    
\end{lemma}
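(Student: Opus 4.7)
My plan is to apply the law of total probability by conditioning on the total number $i$ of deviant miners that fall inside the pooled membership of the primary committee $\com_0$ and the $\kappa$ backup committees. By Appendix~\ref{appendix:autorecovery}, a single committee of size $\cs$ fails (safety or liveness) exactly when its count of deviants reaches the liveness threshold $\lthresh$, and autorecovery fails only when every one of the $\kappa+1$ committees fails simultaneously. Hence $AF$ decomposes as $\bigcup_i (\{i\text{ deviants in pool}\} \cap \{\text{each committee has }\ge\lthresh\text{ deviants}\})$, and the summand in the claimed formula should be read as the product of the marginal and the conditional probability of these two events.

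First I would handle the marginal. Since the $\kappa+1$ committees are disjoint and are drawn by a secure, unbiased election mechanism from the $\all$ miners, their pooled union is distributionally equivalent to one uniform sample of size $(\kappa+1)\cs$ without replacement from the $\all$ miners (of which $\mathscr{M}$ are deviant). The count of deviants in that sample is hypergeometric, which yields
\[
\Pr[\,i \text{ deviants in pool}\,] \;=\; \frac{\binom{\mathscr{M}}{i}\binom{\all-\mathscr{M}}{(\kappa+1)\cs - i}}{\binom{\all}{(\kappa+1)\cs}},
\]
matching the first factor. Next I would tackle the conditional probability. By exchangeability, once the pooled $(\kappa+1)\cs$ miners and the identity of the $i$ deviants among them are fixed, assigning miners to committees is uniform over all $\binom{(\kappa+1)\cs}{i}$ ways of placing $i$ marked items into $(\kappa+1)\cs$ labeled slots grouped into $\kappa+1$ blocks of size $\cs$. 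Counting the favorable assignments is a generating-function problem: writing
\[
\Psi(y) \;=\; \Bigl(\sum_{j=\lthresh}^{\cs} \binom{\cs}{j}\,y^{j}\Bigr)^{\kappa+1},
\]
each factor enumerates the ways a single committee can hold $j\ge\lthresh$ deviants, and the coefficient $[y^{i}]\Psi(y)$ sums $\prod_t \binom{\cs}{j_t}$ over all compositions $j_0+\cdots+j_\kappa = i$ with each $j_t\ge\lthresh$. Dividing by $\binom{(\kappa+1)\cs}{i}$ produces the second factor, and summing the product over $i\in[(\kappa+1)\lthresh,(\kappa+1)\cs]$ (the lower limit forced by the per-committee constraint, the upper limit by the pool size) gives the stated expression.

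The main obstacle I foresee is rigorously justifying the exchangeability/uniform-placement step, since the committee-election mechanism draws from mainchain state and one must argue that conditional on the pooled identities and their deviant count the per-committee allocation carries no residual bias. I would discharge this by invoking the unbiasedness of the underlying Sybil-resistant election (cf.\ Section~\ref{sec:background}) together with the disjointness requirement enforced at election time. Two smaller checks also deserve attention: that $\Psi(y)$'s inner sum is truncated at $\lthresh$ on the low end so the per-committee failure event is captured exactly (consistent with the convention $\lthresh = f+1$ from Appendix~\ref{appendix:autorecovery}), and that hypergeometric coefficients automatically vanish outside $0\le i\le \min(\mathscr{M},(\kappa+1)\cs)$ so the displayed summation range is well-posed.
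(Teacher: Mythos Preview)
Your proposal is correct and follows essentially the same approach as the paper's own proof: condition on the total number of misbehaving nodes landing in the pooled $(\kappa+1)\cs$ committee seats (hypergeometric factor $\beta$), then use the generating function $\Psi(y)=\Phi(y)^{\kappa+1}$ with $\Phi(y)=\sum_{j=\lthresh}^{\cs}\binom{\cs}{j}y^{j}$ to count the assignments in which every committee meets the failure threshold, normalizing by $\binom{(\kappa+1)\cs}{i}$ (factor $\alpha$). Your write-up is in fact somewhat more explicit than the paper's about the exchangeability step justifying the uniform-placement conditional, but the decomposition and the key computations are identical.
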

\noindent where $\mathscr{M}$ is the number of misbehaving miners (so $\mathscr{M} = m+l$ from the previous section), $N$ is the number of all miners, $\Psi(y) = \left( \sum_{i=\lthresh}^{\cs}~\binom{\cs}{i} y^i \right)^{\kappa+1}$, and $[y^{i}]\Psi(y)$ denotes the coefficient of $y^{i}$ in $\Psi$, calculated using $[y^{i}]\Psi(y) = \frac{1}{i!} \odv*[order=i]{\Psi(0)}{y}$. 

We provide a detailed security analysis and formal proof of Lemma~\ref{theorem-sec} in Appendix~\ref{appendix:autorecovery-sec-analysis}.

\section{Autorecovery Security Analysis}
\label{appendix:autorecovery-sec-analysis}
\sysname's autorecovery protocol must ensure that the market can recover from any of the sidechain interruptions (defined in Appendix~\ref{appendix:autorecovery}) with overwhelming probability. In other words, the autorecovery failure probability must be negligible. In this section, we formally model the components related to the autorecovery protocol, define its failure probability, and derive a formula for this probability that allows system designers to configure the protocol parameters in a way that makes this probability negligible to achieve secure operation. This section also includes a formal analysis of the committee size to guarantee honest majority (i.e., meeting the safety threshold for PBFT).

\subsection{Modeling and Definitions}
\label{sec:defs}

\noindent\textbf{Modeling PBFT consensus tolerance.} 
Recall that \sysname uses a PBFT to run the sidechain as in~\cite{Kogias16}. Also, as discussed in Section~\ref{sec:robustness}, \sysname uses the original PBFT's leader change mechanism~\cite{castro1999practical} to replace a misbehaving or unresponsive leader. We define the liveness and safety tolerance (or thresholds) for both of these PBFT protocols along the lines as in~\cite{castro1999practical,hafid2022tractable}.

The original PBFT provides optimal safety and liveness thresholds, which we define as follows (where $\cs$ denotes the committee size).

\begin{definition}[Optimal Safety Threshold]\label{optimal-safety-threshold}
The optimal safety threshold, denoted by $\sthresh^\prime$, is the minimum number of supporting votes required to agree in a PBFT consensus, where $\sthresh'=\lceil\frac{2\cs+1}{3}\rceil$.
\end{definition}

Note that the optimal safety threshold indicates the minimum number of nodes that can approve a leader's proposal, whether it is a malicious agreement or an honest one. That is, if the number of misbehaving nodes is larger than $\sthresh'$, then a malicious agreement (e.g., on an invalid proposal that does not comply with the designated protocol) can be made. To achieve security, and so have an honest agreement, the number of honest responsive nodes must be at least $\sthresh'$. 

\begin{definition}[Optimal Liveness Threshold]\label{optimal-liveness-threshold}
The optimal liveness threshold, denoted by $\lthresh'$, is the minimum number of absent votes that will prevent the committee from agreeing in a PBFT consensus, where $\lthresh'=\cs-\sthresh'+1= \lfloor\frac{\cs+2}{3}\rfloor$.
\end{definition}

For the sidechain PBFT, and without loss of generality, we assume that system designers adopt a PBFT-based consensus protocol that requires at least $\sthresh$ ($\sthresh \in [\sthresh^\prime, \cs)$) supporting votes to consider a leader's proposal approved. We define the general safety and liveness thresholds for this PBFT-based consensus protocol as follows. Note that the thresholds related to the adopted PBFT-based consensus are bounded by the optimal thresholds but can be different with respect to the chosen consensus.

\begin{definition}[Safety Threshold]\label{safety-threshold}
The safety threshold, denoted by $\sthresh$, is the minimum number of supporting votes required to agree in a PBFT-based consensus, where $\sthresh \in [\sthresh^\prime, \cs)$.
\end{definition}

\begin{definition}[Liveness Threshold]\label{liveness-threshold}
The liveness threshold, denoted by $\lthresh$, is the minimum number of absent votes that will prevent agreement in a PBFT-based consensus, where $\lthresh=\cs-\sthresh+1$.
\end{definition}

Note that a higher safety-threshold means larger part of the committee should vote to have an agreement, so an optimal PBFT needs less threshold than other PBFT-based consensus mechanisms. On the liveness side, the higher a consensus is resilient against liveness attacks, the more members need to be non-responding to violate liveness, so optimal for liveness means a higher threshold. Thus, we have the following ordering: 
\begin{equation}\label{thresholds}
    \lthresh \leq \lthresh' \leq \sthresh' \leq \sthresh
\end{equation}
This ordering is used in the rest of this section to merge or reduce threat cases and to extract the threat scenarios and committee election outcomes that violate security.\\

\noindent\textbf{Modeling committee election outcome.}
We consider a network of size $N$, each node (aka miner) can be either honest, lazy, or malicious with probability $p_h$, $p_l$, $p_m$, respectively, with $p_h+p_l+p_m=1$. By assumption, each (primary and backup) committee member is elected independently. We have one primary committee and $\kappa$ backup committees, all denoted by $\com_i$, where $\com_0$ is the primary committee, and $\com_i$ for $i \in \{1, \dots, \kappa\}$ are the backup committees, for an epoch. Let $h_i$, $l_i$, and $m_i$ be the number of honest, lazy, and malicious members in a committee $\com_i$ of size $\cs$ for $i\in \{0, \dots, \kappa\}$, such that $h_i + l_i + m_i  = \cs$.

Let $\mathcal{O}_i=(h_i, l_i, m_i)$ denote the random variable corresponding to committee election outcome for $\com_i$. The total number of malicious or lazy nodes, referred to as misbehaving going forward, in $\com_i$ is denoted by $x_i$. The overall committee election outcome, denoted by $\mathcal{O}$, is a set of committee election outcomes for $\{\com\}_{i=0}^{\kappa}$:
\begin{align}
    \mathcal{O}=\bigl\{(h_i, l_i, m_i)\bigr\}_{i=0}^{\kappa}=\{\mathcal{O}_i\}_{i=0}^{\kappa}
\end{align}
Let $\Pr[\mathcal{O}]$ denote the probability of having $\mathcal{O}$ as the overall committee election outcome and the probability that the committee election outcome for $\com_i$ is such that it contains $X_i$ malicious or lazy members is shown by
$\Pr[{O}_i | X_i=m_i+l_i]$. \\

\noindent\textbf{Committee and autorecovery failure definitions.} A committee $\com_i$ fails when it cannot reach an agreement or agree on malicious proposals. In turn, autorecovery fails when all (primary and backup) committees fail. That is, (1) when the primary committee is unresponsive (does not reach an agreement) or agrees on malicious proposals, happens when Equation~\ref{pl} holds, (2) all backup committees fail to detect the misbehaving primary committee, happens when Equation~\ref{bcf} holds, or (3) after it takes over the primary committee, this backup committee becomes a misbehaving one itself, happens when Equation~\ref{bcl} holds.\footnote{Note that any malicious strategic behavior is covered in this combination, including malicious nodes which constitute the backup committee's majority but do not prevent it from the initial detection of the threat, but after taking over the primary committee behave maliciously---by confirming invalid state updates or not responding}
\begin{align}\label{pl}
\mathcal{O} =\bigl\{\mathcal{O}_0| \lthresh \leq x_0 < \sthresh \lor x_0 \geq \sthresh\bigr\} 
\end{align}

\vspace{-10pt}
\begin{align}\label{bcf}
\mathcal{O} =\bigl\{\{\mathcal{O}_i\}_{i=1}^\kappa |x_i \geq \lthresh'\bigr\} 
\end{align}

\vspace{-10pt}
\begin{align}\label{bcl}
\mathcal{O} =\bigl\{\{\mathcal{O}_i\}_{i=1}^\kappa | \lthresh \leq x_i < \sthresh \lor x_i \geq \sthresh\bigr\}
\end{align}

\noindent Therefore, based on the threshold ordering in Equation~\ref{thresholds}, the auto-recovery fails when the overall committee election outcome equals equation~\ref{backup-committee-mechanism-failure-probability} (so the lower bound for the number of misbehaving nodes $x_i$ is $\lthresh$ as defined below).

\begin{align}\label{backup-committee-mechanism-failure-probability}
    \mathcal{O}_F = 
    \bigl\{\{\mathcal{O}\}_{i=0}^{\kappa} |x_i \geq \lthresh\bigr\}
\end{align} 

Based on that, we define a committee failure and the autorecovery failure (AF) as follows.
\begin{definition}[Committee Failure]\label{committee-failure}
A (primary or backup) committee $\com_i$ fails if in its committee election outcome, $\mathcal{O}_i=(h_i,l_i,m_i)$ the number of misbehaving nodes violates the liveness threshold, i.e., $x_i \geq \lthresh$.
\end{definition}

\begin{definition}[Autorecovery Failure (AF)]\label{autorecovery-failure}
\sysname's autorecovery protocol fails if all primary and backup committees, $\com_i$ for $i \in \{0, \dots, \kappa\}$, fail (cf. Definition~\ref{committee-failure}), and the probability of autorecovery failure, $\Pr(AF)$, is defined as $\Pr(AF) = \Pr(\mathcal{O}_F)$ where $\mathcal{O}_F$ is as defined in Equation~\ref{backup-committee-mechanism-failure-probability}.
\end{definition}

Accordingly, a secure autorecovery protocol is one that has a negligible $\Pr(AF)$. In the next section, we derive an expression for $\Pr(AF)$ that allows system designers to configure the protocol parameters in a way that makes this quantity negligible, thus achieving security.

\subsection{Autorecovery Failure (AF) Probability}
\label{sec:af-analysis}
As defined in the previous section, autorecovery fails when all (primary and backup) committees fail. Since these committees are disjoint, the probability of failure in one committee can be generalized to other committees. That is, the population and probabilities change with each draw, and so we model our committee election as a sampling without replacement. We follow two approaches in our analysis of the failure probability: the joint Hypergoemtric distribution approach, and the generating function approach, which we describe below (the former is easier to understand, while the latter is easier to compute).\medskip

\noindent\textbf{Joint Hypergeometric distribution approach.} Let $N$ be the total number of miners in the network, and $\mathcal{M}$ be the total number of misbehaving miners among $N$. We model the distribution of misbehaving nodes in each committee as a hypergeometric distribution, as each draw decreases the population and changes the probability of selecting a misbehaving node to the committee.

\begin{theorem} 
\sysname's autorecovery failure probability can be expressed as:
\begin{multline*}
\label{th1}
\Pr(AF) = \\
\sum_{x_\kappa=\lthresh}^{\cs}\dots\sum_{x_0=\lthresh}^{\cs}\prod_{i=0}^{\kappa}  \frac{\binom{\mal - \sum\limits_{j=0}^{j<i} {x_j}} {x_i} \binom{(\all-i\cs) - (\mal - \sum\limits_{j=0}^{j<i} {x_j})}{\cs - x_i} } {\binom{\all-i\cs}{\cs}}
\end{multline*}
\end{theorem}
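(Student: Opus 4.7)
The plan is to decompose the autorecovery failure event into the joint failure of the primary committee together with all $\kappa$ backup committees, and then compute this joint probability by applying the chain rule to the sequential sampling-without-replacement process that models the committee election.

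First, I would invoke Definition~\ref{autorecovery-failure} to identify $\Pr(AF)$ with $\Pr(\mathcal{O}_F)$, where $\mathcal{O}_F$ requires $x_i \geq \lthresh$ for every $i \in \{0, \ldots, \kappa\}$. Partitioning this event by conditioning on the exact count $x_i$ of misbehaving members drawn into each $\com_i$ gives
\[
\Pr(AF) = \sum_{x_0=\lthresh}^{\cs}\cdots\sum_{x_\kappa=\lthresh}^{\cs} \Pr(X_0=x_0, \ldots, X_\kappa=x_\kappa),
\]
where $X_i$ denotes the random number of misbehaving members ending up in $\com_i$. Since the committees are disjoint by construction, the election can be viewed as drawing $\kappa+1$ non-overlapping samples of size $\cs$ in sequence from the population of $\all$ miners, of which $\mal$ are misbehaving.

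Next, I would apply the chain rule $\Pr(X_0=x_0,\ldots,X_\kappa=x_\kappa) = \prod_{i=0}^{\kappa}\Pr(X_i=x_i \mid X_0,\ldots,X_{i-1})$ and compute each factor with the hypergeometric distribution. Conditional on the outcomes of the first $i$ draws, the remaining pool contains $\all - i\cs$ miners, of which $\mal - \sum_{j<i} x_j$ are still misbehaving. The number drawn into $\com_i$ is therefore hypergeometric with parameters $\bigl(\all - i\cs,\ \mal - \sum_{j<i} x_j,\ \cs\bigr)$, so
\[
\Pr(X_i=x_i \mid X_0,\ldots,X_{i-1}) = \frac{\binom{\mal - \sum_{j<i} x_j}{x_i}\binom{(\all - i\cs) - (\mal - \sum_{j<i} x_j)}{\cs - x_i}}{\binom{\all - i\cs}{\cs}}.
\]
Multiplying these $\kappa+1$ conditional probabilities and substituting the product back into the outer multi-sum yields exactly the stated formula.

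The main obstacle will be cleanly justifying the passage to the hypergeometric model, which rests on the modelling assumption (stated earlier in Section~\ref{sec:defs}) that each committee member is elected independently and uniformly via an unbiased Sybil-resistant mechanism; without this, one would have to track correlations introduced by the specific election protocol. A secondary subtlety is the arithmetic well-definedness of the binomial coefficients: the inequalities $x_i \leq \mal - \sum_{j<i} x_j$ and $\cs - x_i \leq (\all - i\cs) - (\mal - \sum_{j<i} x_j)$ need not hold for every summand, but any offending summand contributes zero by the convention $\binom{n}{k}=0$ when $k>n$, so the summation range $[\lthresh,\cs]$ can be left unrestricted provided the natural configuration constraint $(\kappa+1)\cs \leq \all$ is imposed.
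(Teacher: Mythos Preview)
Your proposal is correct and follows essentially the same route as the paper: model the sequential committee draws as sampling without replacement, write the conditional law of $X_i$ given the earlier draws as hypergeometric with the depleted parameters, and take the product over $i$ before summing over the failure region $x_i\in[\lthresh,\cs]$. If anything, your version is slightly more explicit in invoking the chain rule and in handling the degenerate binomial terms, but the argument is the same.
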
 
\begin{proof}
The number of malicious or lazy nodes in the first committee is modeled by the hypergeometric distribution with the parameters $N$, $\mathscr{M}$ and $\cs$. The probability of having an election outcome in which the primary committee has $x_0$ misbehaving nodes is
\[ \Pr(\mathcal{O}_0|X_0=x_0)=\frac{\binom{\mathscr{M}}{x_0}\binom{N-\mathscr{M}}{\cs-x_0}}{\binom{N}{\cs}} \]

For the backup committees, we need to account for the nodes that were selected in the committees preceding them as well as for changes in the misbehaving nodes population $\mathscr{M}$; in the $i^{th}$ committee, the overall population is $N - i\cs$ and the malicious nodes are $\mathscr{M} - \sum_{j=0}^{j=i-1} x_j$. Thus, the probability of the $i^{th}$ committee containing $x_i$ misbehaving members would be:
\[ \Pr(\mathcal{O}_i|X_i=x_i)= \frac{\binom{\mal - \sum_{j=0}^{j= i-1} {x_j}} {x_i} \binom{(\all-i\cs) - (\mal - \sum_{j=0}^{j = i-1} {x_j})}{\cs - x_i} } {\binom{\all-i\cs}{\cs}} \]

The overall election outcome which causes autorecovery failure is the one in which all committees fail. Recall that a committee fails when the total number of malicious or lazy nodes in that committee becomes equal to or greater than the committee liveness-threshold, $\lthresh$ (see Definition~\ref{committee-failure}), so we have $\lthresh \leq x_i \leq \cs$. Thus:
\[ \Pr(AF)= \sum_{x_\kappa=\lthresh}^{\cs}\dots\sum_{x_0=\lthresh}^{\cs}\prod_{i=0}^{\kappa}\Pr(\mathcal{O}_i | X_i=x_i) \]

By replacing $\Pr(\mathscr{O}_i | X_i=x_i)$ with its expression computed using the Hypergeometric distribution from above, we obtain the formula in the theorem, which completes the proof. 
\end{proof}

\noindent\textbf{Generating function approach.} The previous formula is complicated and hard to evaluate (needs extensive computations). Hence, we do the same analysis using the generating function approach. In~\cite{hafid2022tractable}, the generating function approach is used to calculate the probability of a sharding take-over attack in a sharding-based protocol in which the attack is successful once at least one shard becomes insecure.

The goal of this analysis is to calculate the number of ways of having all committees fail in order to compute the failure probability of autorecovery. 

First, we calculate the probability of having $x$ malicious or lazy elected committee members, in all $\kappa+1$ committees (so $x = x_0 + \dots + x_{\kappa}$). Then we calculate the ways we can distribute them among $\kappa + 1$ committees such that all of them fail. Finally, we compute the auto-recovery failures by multiplying the two previous probabilities. 

The generating function that represents a failure in one committee is the following:
\begin{align}\label{eq1}
    \Phi(y)=\sum_{x_i=\lthresh}^{\cs}{\binom{\cs}{x_i}\times y^{x_i}}
\end{align}
\noindent where each coefficient $\binom{\cs}{x_i}$ indicates the number of ways $x_i$ malicious or lazy nodes can be distributed in that committee. Note that Equation~\ref{eq1} represents a committee's failure because its number of malicious or lazy nodes is more than the liveness-threshold threshold $x_i \geq \lthresh$.

Since the committees do not share members, they can be represented as a union of disjoint sets. The convolution rule~\cite{lehman2010mathematics} can be used to calculate the generating function of all committees $\Psi(y)$ as the product of their generating functions, i.e., $\Psi(y)=(\Phi(y))^{\kappa+1}$. The coefficient of $y^x$ for $(\kappa+1)\times \lthresh \leq x \leq (\kappa+1)\cs$ represents all the ways through which $x$ misbehaving nodes can be distributed among $\kappa + 1$ committees such that all those committees fail. Hence, we have:
\begin{equation}
	[y^{x}]\Psi(y) = \frac{1}{x!} \odv*[order=x]{\Psi(0)}{y}
\end{equation}

\begin{theorem}
\sysname's autorecovery failure probability can be expressed as follows:
\[\Pr(AF) =
    \sum_{x=(\kappa+1)\lthresh}^{(\kappa+1)\cs}{\frac{\binom{\mathscr{M}}{x}\binom{N-\mathscr{M}}{(\kappa+1)\cs - x}}{\binom{N}{(\kappa+1) \cs}}}
    \cdot
    {\frac{[y^{x}]\Psi(x)}{\binom{(\kappa+1)\cs}{x}}}\]
\end{theorem}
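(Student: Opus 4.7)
The plan is to prove the formula by decomposing the autorecovery failure event via the law of total probability, conditioning on the total number $x$ of misbehaving members elected across all $\kappa+1$ committees combined. This matches the structure of the claimed formula: the first factor (the hypergeometric term) is the probability that exactly $x$ of the $(\kappa+1)\cs$ selected miners are misbehaving, and the second factor is the conditional probability that these $x$ misbehaving miners are distributed among the $\kappa+1$ committees in a way that makes \emph{every} committee fail.

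For the first factor, I would argue that because the $\kappa+1$ committees are pairwise disjoint and each miner is selected independently (equivalently, we are sampling $(\kappa+1)\cs$ distinct miners from a pool of $N$, of whom $\mathscr{M}$ are misbehaving), the total count of misbehaving members in the union of the committees follows a hypergeometric distribution with parameters $(N,\mathscr{M},(\kappa+1)\cs)$. This immediately yields the term
\[
\Pr\bigl[\text{$x$ misbehaving among $(\kappa+1)\cs$ selected}\bigr]=\frac{\binom{\mathscr{M}}{x}\binom{N-\mathscr{M}}{(\kappa+1)\cs-x}}{\binom{N}{(\kappa+1)\cs}}.
\]

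For the second (conditional) factor, I would first observe a symmetry argument: conditioned on the identity of the $(\kappa+1)\cs$ selected miners containing exactly $x$ misbehaving ones, each of the $\binom{(\kappa+1)\cs}{x}$ ways to choose which positions are misbehaving is equally likely. It therefore suffices to count the favorable arrangements, i.e., those in which committee $\com_i$ contains $x_i\in[\lthresh,\cs]$ misbehaving members for every $i$, subject to $\sum_i x_i=x$. For a fixed tuple $(x_0,\dots,x_\kappa)$ satisfying these constraints, the number of placements is $\prod_{i=0}^{\kappa}\binom{\cs}{x_i}$. Summing over all valid tuples is precisely a convolution of the per-committee sequence $\{\binom{\cs}{x_i}\}_{x_i=\lthresh}^{\cs}$ with itself $\kappa+1$ times, which by the convolution rule for ordinary generating functions equals $[y^{x}]\Psi(y)$, where $\Psi(y)=\Phi(y)^{\kappa+1}$ with $\Phi$ as defined in Equation~(\ref{eq1}). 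Dividing by $\binom{(\kappa+1)\cs}{x}$ yields the conditional probability.

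Finally, combining both factors and summing $x$ over its admissible range gives the claim. The range starts at $(\kappa+1)\lthresh$, since each committee individually needs at least $\lthresh$ misbehaving members to fail (Definition~\ref{committee-failure}), and ends at $(\kappa+1)\cs$, the maximum possible total. The main obstacle is the middle step: carefully justifying that the conditional distribution of misbehaving positions is uniform given the total count, and that the convolution rule is applicable because the committees form a disjoint partition of the $(\kappa+1)\cs$ selected positions. Once this symmetry is established, the rest is bookkeeping with generating functions, and the formula follows by assembling the two factors inside the outer sum.
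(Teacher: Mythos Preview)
Your proposal is correct and follows essentially the same approach as the paper: decompose $\Pr(AF)$ via the law of total probability conditioned on the total count $x$ of misbehaving members across all $\kappa+1$ committees, obtain the hypergeometric factor $\beta$ for this count, and obtain the conditional all-committees-fail factor $\alpha=[y^{x}]\Psi(y)/\binom{(\kappa+1)\cs}{x}$ via the convolution rule applied to the per-committee generating function $\Phi$. Your write-up is in fact more explicit than the paper's in justifying the uniformity of the conditional placement of misbehaving members and the applicability of convolution, but the underlying decomposition and the two factors are identical.
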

\begin{proof}

The probability of all committees' failure, knowing that $(\kappa+1) \lthresh \leq x$ misbehaving are elected from the network, is the following.\footnote{Note that the probability of all committees' failure, knowing that ${x}<(\kappa+1)\lthresh$ malicious or lazy members are elected from the network, is zero.}
\[
    \alpha = \Pr(\cs,\lthresh,\kappa,x) = \frac{[y^{x}]\Psi(y)}{\binom{(\kappa+1)\cs}{x}}
\]
Also, the probability of having a total of $x$  malicious or lazy members elected from the network for the total of $\kappa+1$ committees, each with the size of $\cs$ is
\[
    \beta = \Pr(\cs,\kappa,x) = 
    \frac{\binom{\mathscr{M}}{x}\binom{N-\mathscr{M}}{(\kappa+1)\times \cs - x}}{\binom{N}{(\kappa+1) \cs}}
\]

Finally, the autorecovery failure probability, considering the probability of having a total of $x\in\{(\kappa+1)\lthresh, \dots, (\kappa+1)\cs\}$  misbehaving miners elected from the network, and the probability of having all committees failed beacause of these miners, is the following:
\begin{equation*}\label{eq-final}
\begin{aligned}
	    \Pr(AF) &= \Pr(\cs,\lthresh,\kappa) = 
    \sum_{x=(\kappa+1) \lthresh}^{(\kappa+1)\cs}\alpha\beta
\end{aligned}
\end{equation*}

Substituting for $\alpha$ and $\beta$ defined earlier produces the formula in the theorem, which completes the proof.
\end{proof}

\subsection{Recovery Time}
\label{subapdx:Recovery-Time}

In this section, we analyze the time needed for the autorecovery protocol to detect and recover from worst-case interruptions. This happens when the primary committee produces all the meta-blocks during an epoch, with the last meta-block being invalid. Each meta-block takes $\eta - \epsilon$ (for some infinitesimally small $\epsilon > 0$), either because the leader is withholding it for that duration, or the committees are engaged in leader change operations. The following $\kappa - 1$ backup committees behave similarly since they start over when an invalid block is detected.

The total delay imposed by the adversarial behavior on the primary committee or any of the first $\kappa-1$ backup committees would be:
\begin{equation}
   \mathcal{D}_c= (\eta-\epsilon) (k-1) + T_{agr}
\end{equation}

\noindent where $k$ is the number of sidechain blocks in an epoch, and $T_{agr}$ is the time needed for the backup committee to agree on an \textit{misbehaving-committee} message.

In the worst-case scenario, the total delay in the first $\kappa$ committees would be:
\begin{equation}
\{\mathcal{D}\}_0^{\kappa-1}= \kappa \mathcal{D}_c
\end{equation}

The final backup committee will perform the entire work (mining metablocks and a summary-block), each block taking $\eta - \epsilon$ to produce. Thus its delay would be:

\begin{equation}
\mathcal{D}_\kappa=  (\eta-\epsilon) k
\end{equation}
As a result, in the worst case, the time it takes for the autorecovery to recover from is (expressed as a function of the parameters defined above):
\begin{equation}
\label{eqn:delay}
\begin{aligned}
		    \mathcal{D}(k, \eta,\kappa)&= \{\mathcal{D}\}_0^{\kappa-1} + \mathcal{D}_\kappa \\
    &=(\kappa+1) (\eta-\epsilon) (k - 1) \\&+\kappa (T_{agr} ) + (\eta-\epsilon)
\end{aligned}
\end{equation}

\noindent $\eta$, being the timeout for a committee change operation, is bounded by the sidechain round duration $r_{sc}$. Also, $T_{agr}$ is significantly smaller than $r_{sc}$. Thus Equation~\ref{eqn:delay} becomes:
\begin{equation*}
\begin{aligned}
\mathcal{D}(k, \eta,\kappa)&= (\kappa+1) (\eta-\epsilon) (k - 1) +\kappa (T_{agr} ) \\& \:\:\:\: + (\eta-\epsilon) 
                                    \\&\leq (\kappa+1) r_{sc} (k-1) + \kappa r_{sc} + r_{sc}
                                    \\&\leq r_{sc} \bigl((\kappa+1) (k-1) + \kappa + 1\bigr)
                                    \\&\leq r_{sc} (\kappa+1) k = (\kappa+1) \Delta_{epoch}
\end{aligned}
\end{equation*}
where $\Delta_{epoch}$ is the epoch duration. Considering that the adopted committee election outcome is resilient against adaptive adversaries in the $\tau$-agile corruption mode (or slowly adaptive adversary in which $\tau$ steps are needed to corrupt parties)~\cite{pass2017hybrid,avarikioti2019divide}, this means that $\tau \approx\kappa+1$ epoch duration to prevent the adversary from targeting the committee members after being elected.

\subsection{Committee Size Analysis}
\label{one-committee-security}
In this section, we analyze the committee size $\cs$ with the goal of limiting the probability of having a misbehaving majority in the primary committee. In particular, we formulate the probability of this event and bound it to an acceptable failure $F$, allowing us to determine $\cs$ that satisfies $F$ (recall that the primary committee and all backup committees have the same size $\cs$).

We draw the members of the committee from a finite population of $N$ miners, such that $N$ contains $\mathscr{M}$ misbehaving miners and a miner cannot be elected twice. Thus, the committee election can be modeled as a sampling without replacement from the population $N$. The probability of a primary committee containing $x_0$ misbehaving nodes would follow the Hypergeometric distribution:
\begin{equation}
    \begin{aligned}
\Pr(X=x) =  \frac{\binom{\mal} {x} \binom{\all - \mal}{\cs - x} } {\binom{\all}{\cs}}
    \end{aligned}
\end{equation}
The primary committee fails when $x > \theta_l$ (as defined in Definition~\ref{committee-failure}). The probability of this event can be computed as:
\begin{equation}
    \begin{aligned}
    	\Pr(X \geq \theta_l) = \sum_{x=\theta_l}^{\cs} \frac{\binom{\mal} {x} \binom{\all - \mal }{\cs - x } } {\binom{\all}{\cs}}
    \end{aligned}
\end{equation}

In a network with a large node population $N$, as assumed in~\cite{algorand} and~\cite{Kogias16}, committee election can be modeled as a sampling with replacement, thus can be modeled using a Binomial distribution. As such, the probability of a failed committee can be approximated as:
\begin{equation}
    \Pr(X \geq \theta_l) = \sum_{x=\theta_l}^{\cs} \binom{\cs}{x}p^x (1-p)^{\cs - x}
\end{equation}
 
Both Binomial and Hypergeometric distributions can be bound using a Chernoff bound~\cite{hariharan2012chernoff}. Let $p$ be the probability of a miner to be a misbehaving one, so $p = p_m + p_l$ from before, we define $\mu = p\cs $ to be the expected number of misbehaving nodes and so we can express $\theta_l = \omega + \mu$ for $\omega > 0$ (where we use $\omega$ instead of $\Delta$ from~\cite{hariharan2012chernoff} to avoid overloading) leading to $\Pr(X \geq \theta_l) = \Pr(X \geq \omega + \mu)$, which can be bounded as:
\begin{equation}
    \begin{aligned}
        \Pr(X \geq \omega + \mu) \leq 
            \begin{cases}
                \exp(\frac {-\omega^2}{3\mu}) & if \frac{\omega}{\mu} < 1 \\
                \exp(\frac{-\omega}{3}) & otherwise
            \end {cases}
    \end{aligned}
    \label{eqn:chernoff-1}
\end{equation}

We define $\gamma = \frac{\lthresh}{\cs}$, so we have $\omega + \mu = \gamma \cs$, and hence Equation~\ref{eqn:chernoff-1} becomes:
\begin{equation}
    \begin{aligned}
       \Pr(X \geq \omega + \mu) &= \Pr(X \geq \gamma \cs) \\ &\leq 
            \begin{cases}
                \exp(\frac {-(\gamma - p)^2 \cs}{3p}) & if \frac{\gamma}{p} < 2 \\
                \exp(\frac{-(\gamma-p)\cs}{3}) & otherwise
            \end {cases}
    \end{aligned}
    \label{eqn:chernoff-2}
\end{equation}

For $\gamma = \frac{1}{3}$, and an adversarial power between 25\% and 30\%, $\frac{\gamma}{p}$ is less than 2 ($\frac{4}{3}$ and $\frac{10}{9})$ respectively. We use $\exp(\frac {-(\gamma - p)^2 \cs}{3p})$ to bound the failure probability that we set to be $F=10^{-10}$. $\exp(\frac{-(\gamma - p)^2 \cs}{3p}) \leq 10^{-10}$ means that  $(\gamma-p)^2 \cs \geq 30 p \ln(10)$ and $\cs \geq \frac{30 p \ln(10)}{(\gamma-p)^2}$. Thus, the committee size, in this case, needs to be at least between $2487$ and $18651$. For failure probabilities $F_1 = 10^{-5}$ and $F_2 = 10^{-3}$, the committee size needs to be at least between $1244$ and $9326$, and at least between $747$ and  $5595$, respectively.

\section{Compact Proofs of Retrievability} 
\label{appendix:por}
The compact proof of retrievability (PoR) introduced in~\cite{shacham2008compact} is based on BLS signatures~\cite{boneh2001short}. It is a challenge-response protocol; the verifier challenges the prover to compute a proof based on a randomly selected parts (or blocks) of the stored file. Without these blocks, the prover cannot generate a valid proof. This scheme is proven to be secure against any PPT adversary under the Computational Diffie–Hellman (CDH) assumption over bilinear groups in the random oracle model.

\textbf{Notation.} We denote the natural numbers by $\NN$, the integers by $\ZZ$, and the integers modulo some prime $p$ by $\ZZ_p$. A cyclic multiplicative group $\GG$ has an order $p$ generated by generator $g\in \GG$. We let $\lambda \in \NN$ denote the security parameter. A bilinear group is given by a description $(e, p,\GG,\GG_T)$ such that $\GG$ and $\GG_T$ are cyclic groups of order $p$ and $e: \GG \times \GG \to \GG_T$ is a bilinear asymmetric map (pairing), which means that $\forall a, b \in \ZZ_p : e(g^a, g^b) = e(g, g)^{ab}$. Thus, we implicitly have that $e(g, g)$ generates $\GG_T$. The membership in $\GG$ and $\GG_T$ can be efficiently decided, group operations and the pairing $e(\cdot, \cdot)$ are efficiently computable, generators can be sampled efficiently, and the descriptions of the groups
and group elements each have linear size. Lastly, we have the hash function $ H:\{0, 1\}^*\rightarrow{\GG}$, which is modeled as a random oracle.

\textbf{Preprocessing phase.} In this scheme, and before the protocol can start, the client does the following: It breaks the erasure encoded file $M$ into $n$ blocks, each containing $s$ sectors. The file sectors are denoted by $\{m_{ij}\}$ for $1 \leq i \leq n$ and $1 \leq j \leq s$ (each sector $m_{ij} \in \ZZ_p$). So if the processed file is $b$ bits long, then it will be divided into $n = \lceil b/(s \log p)\rceil$ blocks. Second, the client generates a signature keypair $(\sk, \pk)$, chooses a random $\alpha \in \mathbb{Z}_p$, and computes $v = g^\alpha$. The client sets the secret key as $(\alpha, \sk)$ and the public key as $(v, \pk)$.

After that, the client computes the \textit{authenticator values} $\sigma_i$ for each block $i = \{1, \dots, n\}$ as:
    \begin{equation}
        \sigma_i = \Bigg(H\left(name \parallel i\right).\prod_{j=1}^{s}u_j^{m_{ij}}\Bigg)^\alpha 
    \end{equation}
where $name \in \mathbb{Z}_p$ is a random file name, and public generators $\{u_j\}_{1 \leq j \leq s}$ are randomly sampled from $\GG$ for each file. The client then computes a \textit{file tag} $\tau = name \parallel n \parallel u_1 \parallel \dots \parallel u_s$ together with a signature on $\tau$ using her private key $\sk$. Finally, the client sends the file sectors $\{m_{ij}\}$ and the authenticators ${\sigma_i}$ to a storage server.

\textbf{PoR protocol operation.} It proceeds as follows:
\begin{enumerate}
    \item Generate challenge: A verifier verifies the signature on the file tag $\tau$ using the client's public key $\pk$, and if valid, parses it to recover $name$ and $\{u_j\}_{1 \leq j \leq s}$. Then, it chooses a random challenge set $Q$ of index–coefficient pairs $\left\{\left(i,\ v_i\right)\right\}$ (where $i \in \{1, \dots, n\}$ and $v_i \in \ZZ_p$), and sends this set to the prover.
    
    \item Generate response: The prover calculates the response $\left(\sigma, \mu \right)$ where $\mu=\{\mu_j\}_{1 \leq j \leq s}$, and $\mu_j = \sum_{(i,\ v_i)\in Q}{v_im_{ij}}$. That is, it combines the blocks in $Q$ sector-wise, each with its multiplier $v_i$. The prover then computes the aggregated authenticator for the blocks in the challenge $\sigma = \prod_{(i,v_i)\in Q}{\sigma_i^{v_i}\in G}$, and sends $(\mu, \sigma)$ to the verifier.
    
    \item Verify response: The verifier computes the pairing and accepts the proof if $e(\sigma,g) = y$:
    \begin{equation}
        y = e\Bigg(\prod_{(i,\ v_i)\in Q}{H(name\parallel i)}^{v_i}.\prod_{j=1}^{s}{u_j^{\mu_j},\ v}\Bigg)
    \end{equation}
\end{enumerate} 

\vspace{3pt}
\textbf{Non-interactive PoR protocol in our proof-of-concept implementation.} The protocol described above is interactive, which does not suit the blockcahin setup. We can transform into an non-interactive one as follows: the prover computes the random challenge set using a random seed (for a security parameter of $\lambda = 128$ the seed should be 256 bits). The source of randomness for the seed can be pulled from the hash of latest block in the blockchain (since we model hash functions as random oracles). So, instead of having the verifier transmit the challenge set, for each round, the prover computes this set using the hash of the block mined in the previous round. Thus, in our setup, the client is the file owner who preprocesses the file as mentioned above, the blockchain miners are the verifiers, and the storage servers are the provers. The client put the value $\tau$ in the contract-propose transaction.

\textbf{Parameter selection.} As described above, one authenticator value for each file block is a storage overhead for the storage server. Using the bilinear BN256 group of elliptic curves in our implementation, authenticator values are points on this curve and $|\sigma_i| = \log p$. Therefore, having $s$ as the number of sectors, the total storage overhead is $\frac{b}{s \log p} \cdot \log p=\lceil\frac{b}{s}\rceil$. Choosing a larger $s$ value reduces the one-time overhead of servers' storage at the cost of higher communication. This communication is the PoR transactions that the server repeatedly sends to the blockchain miners in order to prove that it still storing the file. Each PoR proof includes one point and $s$ scalars, so the proof size is $(1+s) \log p$. Thus, a PoR transaction contains the PoR proof, along with the round number for which it was generated. In our implementation, we set $s=2$, thus the size of the PoR transaction is 515 bytes.

For example, for a $10$ GB file, with $s = 1$, the server storage overhead is $10$ GB (for the authenticator, so total storage is 20 GB for both), but the server response (PoR) contains one scalar in $\ZZ_p$ ($256$ bits) plus a point (64 bytes) on the curve $\GG$. Whereas, when $s = 10$, with the same file size, the server storage overhead would be $1$ GB for the authenticator (so total storage is 11 GB) but the server response contains $10$ scalars in $\ZZ_p$ ($2560$ bits) and a point on the curve $\GG$. Note that the PoR transaction size is a function of $s$, not the file size. Also, the query set size does not impact PoR transaction size or the storage overhead. This set is generated locally by the server and anyone can verify it using the round seed, the contract file-tag $\tau$, and the client public key, which are all publicly available.

\section{Transaction Types and Traffic Generation}
\label{apdx:traffic}

In this section, we provide details on the structure of the transactions and the way traffic is generated and processed by the miners in our proof-of-concept implementation.

\textbf{Transaction types.}
We have six transaction types as follows (their sizes are found in Table~\ref{tab:size-table}):
\begin{description}
\item[$\mathsf{Contract\mbox{-}propose}$:] Corresponds to a service agreement proposed by a client. Its structure contains the contract ID, a pointer to a payment structure (the escrow), a duration, a file size, a start round for the service, service fee (which is the payment value per mainchain round), and a file tag for the file to be stored.

\item[$\mathsf{Contract\mbox{-}commit}$:] A server issues a commit transaction when it commits to providing a service to a client; a response to a $\mathsf{Contract\mbox{-}propose}$, so it contains the contract ID.

\item[$\mathsf{Payment}$:] The payment (currency transfer) transactions in our system follow the UTXO model and have the same structure as in Bitcoin~\cite{site:bitcoin-tx}.

\item[$\mathsf{Proof\mbox{-}of\mbox{-}retrievability}$:] It is issued by a server every round for each file it stores. It contains the round number during which a PoR is generated, and a cryptographic proof based on the scheme from~\cite{shacham2008compact}.

\item[$\mathsf{Storage\mbox{-}payment}$:] This transaction pays the server from the client's escrow. It has the same structure as payment transaction, but adds the contract ID.

\item[$\mathsf{Sync\mbox{-}transaction}$:] Its structure follows the summary rules adopted in \sysname. In our proof-of-concept implementation, we summarize PoRs by counting them. Thus, this transaction contains an array of contract IDs (where we assume each contract is handled by one server) and the number of PoRs generated for that contract (by the corresponding server) for the epoch. Thus, its size varies based on the number of active contracts in the epoch.
\end{description}

\textbf{Traffic generation.}
When configuring our experiments, we assign each server a contract(s) with a specific duration of service. While the contract is active, the server issues one PoR transaction per mainchain round. The contract service duration (in mainchain rounds) is selected at random from a normal distribution $\mathcal{N}(\mu = 40,\sigma^{2} = 20)$ with a mean $\mu$ and standard deviation $\sigma$. Once a contract expires, mainchain miners issue a \textsf{Storage-Payment} transaction to pay the server for the service. Since we automatically match clients and servers, \textsf{contract-propose} and \textsf{contract-commit} are issued at the same time. This means that the contract is activated once again.

For traffic distribution, we follow the one of Filecoin~\cite{filecoin}. Using the the Filfox Explorer~\cite{site:filfox}, we find that around 2\% of all transactions in the system are currency transfer (called $\mathsf{Send}$ in the explorer). The remainder is split into 96\% representing service proof transactions (e.g., in the explorer these include $\mathsf{TerminateSectors, ProveCommitSector}$, etc.) and 2\% for other operational transactions (e.g., $\mathsf{ChangeOwnerAddress}$). So, in terms of our system, these translate into 2\% payment transactions, and 98\% PoR and propose/commit contract transactions.

\begin{table}[t!]
\centering
\caption{Transaction Types and Sizes}
\label{tab:size-table}
\begin{tabular}{|p{0.32\columnwidth}|p{0.58\columnwidth}|}
\hline
Transaction Type & Size (Bytes)\\
\hline \hline
\textsf{contract-propose} &  645 \\ \hline
\textsf{contract-commit}  &  79  \\ \hline
\textsf{Payment} &   398\\ \hline
\textsf{Proof-of-Retrievability} & 515 \\ \hline
\textsf{Storage-Payment} & 406 \\ \hline
\textsf{Sync-transaction} & Varies based on number of active contracts\\
\hline
\end{tabular}
\vspace{-6pt}
\end{table}

\textbf{Traffic processing.}
As discussed in Section~\ref{sec:implementation}, in our proof-of-concept implementation miners maintain separate queues for the traffic they process. Miners on the mainchain maintain two queues; one is for payment transactions, as they are generated as quota of the overall traffic, and are allocated a quota of each block size (in our experiment, it's 30\%). The other is for all other mainchain traffic (i.e., all other transactions, except Proofs-of-Retrievability when running with \sysname). The second queue enforces a higher priority for sync-transactions, and all the other ones are FIFO. Miners on the sidechain committee have another queue for sidechain traffic (i.e., Proofs-of-Retrievability transactions). In each round, the round leader forms a block of the transactions in the queue corresponding to the respective mainchain or sidechain. Each block is filled with transactions up to its capacity, or until queues of a specific chain are empty.

\end{document}